\newcommand{\laformule}{\varphi_{0}}
\newcommand{\pptit}{\leq}
\newcommand{\pgd}{\geq}
\newcommand{\fifi}{\varphi}
\newcommand{\nit}{\ensuremath{\mathbb{N}}} 
\newcommand{\ratio}{\ensuremath{\mathbb{Q}}} 
\newcommand{\vers}{\rightarrow}
\newcommand{\versdans}[1]{\xrightarrow[#1]{}}
\newcommand{\qqs}{\forall}
\newcommand{\chemin}[2]{\xRightarrow[#2]{#1}}               
\newcommand{\larc}[1]{\xrightarrow{#1}}                    
\newcommand{\abs}[1]{\lvert#1\rvert}     
\newcommand{\resp}{{\it resp. }}
\newcommand{\prgs}{{probabilistic regular graphs}\xspace}
\newcommand{\ens}[1]{\left \{#1\right\}} 
\newcommand{\init}[1]{I}  
\newcommand{\fini}[1]{F}  
\newcommand{\injec}{\iota}
\newcommand{\canon}{\mathsf{Can}}
\newcommand{\level}{\mathsf{Lev}}
\newcommand{\alphab}{\Sigma} 
\newcommand{\hrgg}{{\sc hr}-gram\-mar\xspace}
\newcommand{\hrggs}{{\sc hr}-gram\-mars\xspace}
\newcommand{\axio}{Z}
\newcommand{\canbis}{\mathsf{Can}}
\newcommand{\phrgg}{{\sc phr}-gram\-mar\xspace}
\newcommand{\phrggs}{{\sc phr}-gram\-mars\xspace}
\newcommand{\proba}{\mu}
\newcommand{\compte}{\#}
\newcommand{\new}{\mbox{{\bf New}}}
\newcommand*{\modality}[2][]{\def\@rgone{#1}%
  \ifx\@rgone\@empty
  \ensuremath{\text{\rmfamily\upshape\bfseries {#2}}}%
  \else
  \ensuremath{\text{\rmfamily\upshape\bfseries {#2}}_{#1}}%
  \fi}
\newcommand*{\dmodality}[2][]{\def\@rgone{#1}%
  \ifx\@rgone\@empty
  \ensuremath{\smash{\widetilde{\text{\rmfamily\upshape\bfseries {#2}}}}%
        {\vphantom{\text{\rmfamily\upshape\bfseries {#2}}}}}%
  \else
  \ensuremath{\smash{\widetilde{\text{\rmfamily\upshape\bfseries {#2}}}_{#1}}%
        {\vphantom{\text{\rmfamily\upshape\bfseries {#2}}_{#1}}}}%
  \fi}
\newcommand*{\Until}[1][]{\ifmmode\,\fi\modality[{#1}]{U}\ifmmode\,\else\expandafter\xspace\fi}
\newcommand*{\G}[1][]{\modality[{#1}]{G}\ifmmode\,\else\expandafter\xspace\fi}
\newcommand*{\F}[1][]{\modality[{#1}]{F}\ifmmode\,\else\expandafter\xspace\fi}
\newcommand*{\Next}[1][]{\modality[{#1}]{X}\ifmmode\,\else\expandafter\xspace\fi}
\newcommand{\prob}{\mathbb{P}} 
\newcommand{\calP}{\mathcal{P}} 
\newcommand{\M}{\mathcal{M}} 
\newcommand{\Succ}{\mathsf{Succ}} 
\newcommand{\calG}{\mathcal{G}} 
\newcommand{\conf}[1]{V_{#1}} 
\newcommand{\sem}[1]{\llbracket #1 \rrbracket} 
\newcommand{\AP}{\mathsf{AP}} 
\newcommand{\Win}{\mathbb{W}} 
\newcommand{\Dec}{\mathbb{D}} 
 \newtheorem{theorem}{\bf Theorem}
 \newtheorem{proposition}[theorem]{\bf Proposition}
\theoremstyle{definition}
\newtheorem{definition}[theorem]{Definition}
\newtheorem{example}[theorem]{Example}
\newtheorem{remark}[theorem]{Remark}
\numberwithin{theorem}{section}
\numberwithin{figure}{section} 
\title{Probabilistic regular graphs}
\author{Nathalie Bertrand
\institute{INRIA Rennes Bretagne Atlantique}
\email{nathalie.bertrand@inria.fr}
\and 
Christophe Morvan
\institute{Universit\'e Paris-Est}
\institute{INRIA Rennes Bretagne Atlantique} 
\email{christophe.morvan@univ-paris-est.fr}
}
\date{\today}
\begin{document}

\maketitle

\begin{abstract}
  {\bf Abstract.} Deterministic graph grammars generate regular
  graphs, that form a structural extension of configuration graphs of
  pushdown systems. In this paper, we study a probabilistic extension
  of regular graphs obtained by labelling the terminal arcs of the
  graph grammars by probabilities. Stochastic properties of these
  graphs are expressed using PCTL, a probabilistic extension of
  computation tree logic. We present here an algorithm to perform
  approximate verification of PCTL formulae. Moreover, we prove that
  the exact model-checking problem for PCTL on probabilistic regular
  graphs is undecidable, unless restricting to qualitative
  properties.
  Our results generalise those of~\cite{EKM06}, on probabilistic
  pushdown automata, using similar methods combined with graph
  grammars techniques.
\end{abstract}

\section{Introduction}

Formal methods have proven their importance in the validation of
hardware and software systems. In order to represent real systems more
accurately, several aspects need to be reflected in the
model. Recursion and random events are examples of such extra features
and lead to complex models that incorporate two sources of complexity:
probabilities and infinite state space. For each of these features
independently, verification techniques have been established.

Infinite state systems, on the one hand, cover a large range of
expressive power. Among them pushdown systems offer a simple infinite
framework by extending finite state systems with a stack. Despite the
fact that their configurations graph is infinite, pushdown systems
enjoy several interesting properties. In particular, the reachability
problem is decidable, and the reachability set is effectively
regular~\cite{buchi:regular}. Moreover, monadic second order logic
(MSO)~\cite{muller} is decidable over the graph of configurations for
pushdown automata.
Alternatively, the configurations graphs of pushdown automata can be
generated by deterministic graph grammars, introduced by
Courcelle~\cite{courcelle}. Deterministic graph grammars generate
\emph{regular graphs} which also have decidable MSO~\cite{courcelle},
and which characterise the same structures as pushdown
systems~\cite{caucal01} when restricting to finite degree. 
We advocate that these grammars offer a simple presentation and
emphasize the structural properties of graphs.
Indeed, contrary to pushdown automata, graph grammars are more robust
to transformations. Precisely, many transformations of pushdown
automata affect the configurations graph, and thus its stucture-based
properties. On the contrary, graph grammars allow for transformations
in the representations which preserve the structure. Indeed, most
graph grammar transformations presented in~\cite{caucal:gragra}
preserve, up to isomorphism, the generated graph. Using such
representations thus seems promising in order to express structural
properties of systems.

Probabilistic systems, on the other hand, also raised intensive
research concerning verification, starting with model-checking
algorithms for Markov chains, and Markov decision processes for
various logics. In the last decade, models combining probabilities and
infinite-state spaces have been investigated. Examples of such models
are probabilitic pushdown systems and probabilistic lossy channel
systems. These systems are finitely described and generate infinite
Markov chains on which one can express probabilistic properties, for
example using the probabilistic extension of CTL,
PCTL~\cite{HJ-fac94}.
This logic allows to express, \emph{e.g.}, the probability of
satisfying a given CTL path formula. More generally, PCTL can be seen
as a variant of CTL where the usual forall quantifier is replaced with
a probabilistic comparison to a threshold: the whole state formula is
satisfied if the probability of the set of executions satisfying the
CTL path formula meets the constraint expressed by the threshold. A
restricted fragment of this logic, called \emph{qualitative} PCTL is
obtained when allowing values $0$ and $1$ only for the thresholds. In
constrast, the general case (where threshold values are arbitrary) is
referred to as \emph{quantitative} PCTL.
 The model-checking problem for probabilistic
logics over infinite Markov chains generated by probabilistic lossy
channel systems or probabilistic pushdown automata is a natural and
deeply investigated issue.
Concerning probabilistic pushdown automata, a series of papers
established fundamental model checking
results~\cite{BKS-stacs05,EKM06,Kucera-entcs06,BBHK-lpar08},
some of the most significant ones being the decidability of the model
checking of qualitative PCTL formulae, and the undecidability of the
quantitative version.

In this paper, we consider a probabilistic extension of regular
graphs. To this aim, we define probabilistic graph grammars as graph
grammars where terminal arcs are labelled with
probabilities. Probabilistic graph grammars hence generate
infinite-state Markov chains, and form a natural generalisation of
probabilistic pushdown automata. For these models, we extend the
results of~\cite{EKM06} concerning the model-checking of
PCTL. Precisely, for probabilistic graph grammars we prove the
decidability of the qualitative PCTL model-checking ; we detail how to
approximate the probability of path formula ; and we prove the
undecidability of the exact quantitative PCTL model-checking.

\section{Regular graphs and probabilistic regular graphs}

\subsection{Hypergraphs and graphs}
Let $F$ be a ranked alphabet, and $\rho : F \vers \nit$ its ranking
function that assigns to each element of $F$ its \emph{arity}. We
denote by $F_n$ the set of symbols of arity $n$. Given $V$ an
arbitrary set of vertices, a \emph{hypergraph} $G$ is a subset of
$\cup_{n\pgd 1} F_n V^n$. The vertex set of $G$, denoted $V_G$, is
defined as the set $V_G = \{v\in V\, |\, FV^*vV^* \cap G \neq
\emptyset\}$. In our setting, this set is always countable. An element
of $F_n V^n$ is an {\em hyperarc} of arity $n$, denoted by $f\ v_1\
v_2\ \cdots \ v_n$.

Graphs form a restricted class of hypergraphs where hyperarcs have
arity at most $2$. Precisely, a {\em graph} $G$ over $V$ is a subset
of $F_2VV\cup F_1V$. For $a \in F_2$, and $s,t \in V$, $ast \in G$ is
an \emph{arc} of $G$ with source $s$, target $t$ and label $a$. For $a
\in F_1$ and $s \in V$, if $as$ is an element of $G$, $a$ is referred
to as the \emph{colour} of vertex $s$ (observe that a vertex may have
several colours). $Dom(G)$, $Im(G)$ and $V_G$ denote respectively the
set of sources, targets and vertices of $G$. The {\em in-degree}
(\resp {\em out-degree}) of a vertex $v$ is the number of arc having
source (\resp target) $v$; its {\em degree} is the sum of the in and
out-degrees.  The transition relation underlying $G$ is composed of
transitions $s \xrightarrow{a}_G t$ for $ast \in G$. A \emph{path} in
$G$ is a finite sequence of transitions $v_1 \xrightarrow{a_1} v_2
\cdots \xrightarrow{a_{n-1}} v_n$, also noted $v_1 \stackrel{a_1
  \cdots a_n}{\Rightarrow}_G v_n$.

A {\em graph morphism} from $G$ to $G'$, is a mapping $g : V_G
\rightarrow V_{G'}$ such that for all $u,v \in V_G$, $u\xrightarrow{a}_G
v$ implies $g(u)\xrightarrow{a}_{G'} g(v)$. Such a morphism is an {\em
  isomorphism} if $g$ is a bijection, and its inverse is also a
morphism.

\subsection{Graph grammars}
Graph grammars are a convenient tool to represent graph
transformations. Starting from a hyperarc, the axiom, and using
rewriting rules, these grammars generate families of infinite graphs
that enjoy interesting properties (for example the decidability of MSO
theory, or the fact that they generate context-free languages). Graphs
generated by graph grammars form a slight extension of the graphs of
configurations for pushdown automata, namely such a graph may have
vertices of infinite degree (still there are only finitely many
distinct degrees). A motivation for generating these graphs using
graph grammars rather than pushdown automata is to emphasize the
structural properties of the obtained graphs, since they are defined
up to isomorphism. In particular, stochastic properties of Markov
chains (like probability of a path or a set of paths) are invariant
under graph isomorphism, this justifies the use of structural
characterizations such as graph grammars.

  \begin{definition}\label{def:gragra}
    A {\em hypergraph grammar} (\hrgg for short), is a tuple $\calG =
    (N, T, R, \axio)$, where:
    \begin{itemize} 
    \item $N$ and $T$ are two ranked alphabets of {\em non-terminal}
      and {\em terminal} symbols, respectively;
    \item $\axio \in N$ is a $0$-arity non-terminal, the
      \emph{axiom};
    \item $R$ is a set of \emph{rewriting rules} assigning to each
      non-terminal $A\in N$ a pair $(H_A,\injec_A)$ where $H_A$ is a
      finite hypergraph, and $\injec_A:\{1,\cdots,\rho(A)\}
      \hookrightarrow V_{H_A}$ is an injective mapping associating to
      each position in an hyperedge labelled $A$ a vertex in $H_A$.
    \end{itemize}
  \end{definition}

\begin{example}\label{ex:running}
  Figure~\ref{fig:gramrun} presents an example of a \hrgg.
  Formally, it is defined by $\calG=(\ens{Z}_0\cup\ens{A}_2,
  \ens{V_1,V_2}_1\cup \ens{a,d}_2,
  \ens{(H_Z,\injec_Z),(H_A,\injec_A)}, Z) $.
  Non-terminal $Z$ (\resp $A$) is the only arity $0$ (\resp $2$)
  non-terminal symbol; $\ens{V_1,V_2}$ (\resp $\ens{a,d}$) are the two
  colours (\resp arc-labels); hypergraphs $H_Z$, $H_A$ and injection
  $\injec_A$ are represented in the first part of the figure. For
  simplicity, $\overline{V_1}$ denotes the absence of colour
  $V_1$. The injection $\injec_A$ is used to identify vertices of
  $H_A$ with vertices of an arc labelled $A$ in the rewriting process
  defined later on.
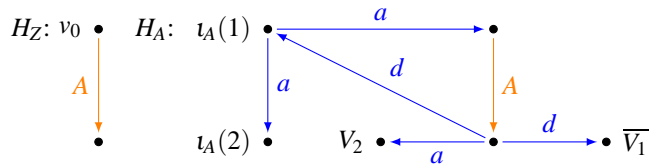
\begin{figure}[htbp]
\begin{center}
  {\small
    \begin{tikzpicture}[scale=.75, auto=left]
      %
      \tikzstyle{every label}=[label distance= 0.5mm];
      \tikzstyle{sommets}=[draw,shape=circle,inner sep=0pt, outer sep = 0pt, fill= black,minimum size= 1mm];
      \tikzstyle{recriture}=[line width = 1pt, arrows = -latex];
      \tikzstyle{arct}=[arrows = -latex, shorten >=2pt, shorten <=2pt,color=blue];
      \tikzstyle{arcnt}=[arrows = -latex, shorten >=2pt, shorten <=2pt,color=orange];
      
      \draw 
      node (HZ) at (-2.2,2) {$H_Z$:}
      node (as) at (-1,2) [sommets, label=left:$v_0$]{}
      node (at) at (-1,0) [sommets]{} 
      node (HA) at (0,2) {$H_A$:}
      ++(2,2) 
      node (s) at +(0,0) [sommets, label=left:$\injec_A(1)$]{}
      node (t) at +(0,-2) [sommets, label=left:$\injec_A(2)$]{} 
      node (Ap) at +(2,-2) [sommets, label=left:$V_2$]{}
      node (ABq) at +(4,-2) [sommets]{}
      node (new1) at +(6,-2)[sommets, label=right:$\overline {V_1}$]{}
      node (ABCr) at +(4,0) [sommets]{}
      (as) edge[arcnt] node [midway,swap] {$A$} (at);
      

      \draw [arct] 
      (s) edge node [midway] {$a$} (t)
      (s) edge node [midway] {$a$} (ABCr)
      (ABq) edge node [midway] {$d$} (new1) 
      (ABq) edge node [midway,auto=right] {$d$} (s)
      (ABq) edge node [midway] {$a$} (Ap)
      (ABCr) edge[arcnt] node [midway] {$A$} (ABq);

    \end{tikzpicture}
  }
  \caption{An example of a graph grammar.}
\label{fig:gramrun}
\end{center}
\end{figure}
\end{example}

\begin{remark}
  Note that Definition~\ref{def:gragra} corresponds to the classical
  definition of deterministic hypergraph grammars
  \cite{courcelle,caucal:gragra}, since there is exactly one rewriting
  rule for each non-terminal symbol. Moreover, we implicitely assume
  that terminal symbols have arity one or two (Markov chains
  are transitions systems, thus arities greater than $2$ do not make
  sense in this context). This way, the generated graphs are coloured
  graphs (or transition systems where transitions and states are
  labelled).
  \end{remark}

  Let $\calG = (N, T, R, \axio)$ be a hypergraph grammar. Given $A\in
  N$ a non-terminal, we denote by $\versdans {R,A}$ the rewriting
  relation between hypergraphs with respect to the rule $(H_A,\iota_A) \in
  R$. Formally, a hypergraph $M$ rewrites into $M'$, written $M
  \versdans {R,A} M'$, if there exists a hyperarc $X = Av_1v_2\ldots
  v_p$ in $M$ such that $M' = (M- X) \cup h(H_A)$ where $h$ is an
  injective morphism that maps $\iota(i)$ to $v_i$ and other vertices
  of $H_A$ to vertices outside $M$. Intuitively, $M'$ is obtained from
  $M$ by replacing $X$ (of non-terminal label $A$) with $H_A$. The rewriting
  relation extends to the \emph{complete parallel rewriting} relation:
  the rewriting of each non-terminal simultenaously. We write $M
  \chemin{}{R} M'$ for the complete parallel rewriting of $M$ into
  $M'$. In other words, all non-terminal hyperedges of $M$ have been
  replaced in $M'$ using their respective rewriting rules in $R$.
  The set of all images of a graph $M$ by $\chemin{}{R}$ is
  denoted by $R[M]$. This set contains all isomorphic graphs obtained
  by applying the rules of $R$ to $M$.  For $n>1$, this notation is
  extended inductively into $R^n[M]=\bigcup_{M'\in R^{n-1}[M]}R[M']$,
  it is the set of all isomorphic graphs obtained after $n$
  applications of the complete parallel rewriting.

  Let $N$ and $T$ be sets of non-terminals, respectively
  terminals. Given $h$ a hypergraph labelled by $N \cup T$, we denote
  by $[H]$ the set of terminal arcs and colours in $H$: $[H] = H\cap
  (T_2\ V_H\ V_H\cup T_1\ V_H)$. For $\calG=(N, T, R, \axio)$ a \hrgg, the
  set of graphs generated by $\calG$ is defined as follows:
\[
\calG^\omega = \ens{\cup_{n\pgd0}[H_n]\ |\ H_0=Z \wedge \qqs n\pgd 0,
  H_n \chemin{}{R} H_{n+1} }
\]
Note that if $H_n \chemin{}{R} H_{n+1}$, then $[H_n] \subseteq
[H_{n+1}]$. Thus the set $\calG^\omega$ contains graphs which are all
isomorphic. A graph $H$ is \emph{generated} by $\calG$ if it belongs
to $\calG^\omega$.
Let $H\in \calG^\omega$, for each vertex $v\in V_H$, we let
$\level(v)$ be the \emph{level} at which $v$ is generated. Formally,
$\level(v) = \min \{k\, |\, v \in [H_k]\}$. Furthermore, notation
$\canon(v)$ stands for the {\em canonical} image of $v$ in the finite
set of vertices $\bigcup_{A\in N} V_{H_A}$. Assuming $H_{k-1}
\xrightarrow{R,A} H'_{k-1}$ for some $A\in N$ and $v\in H'_{k-1}$,
$\canon(v)$ is the unique vertex in $H_A$ whose image by $h$ is
$v$. When vertex $v$ is generated in $H_A$ at the $i$-th position of
an arc labelled by $B \in N$, we write $\canon(v) = (B,i)_A$.
Observe that, since $v\not\in H_{k-1}$, for each $j$, $v$ is distinct
from $\injec_A(j)$. 

\begin{example}\label{ex:rewrite}
  Figure~\ref{fig:gramrun} presents an example of a \hrgg,
  Figure~\ref{fig:rewrite} illustrates, starting from the axiom $Z$,
  two successive applications of the complete parallel rewriting
  (which coincides here with the rewriting of a single non-terminal)
  and the iteration of this process.
  In this example, each application of the rewriting rules adds new
  vertices as well as new arcs to the graph. Observe that the names of
  the vertices (except for $v_0$ that is distinguished) are not
  depicted, since they are not relevant to our purpose. Up to renaming
  of the vertices, there is a unique generated infinite graph.
  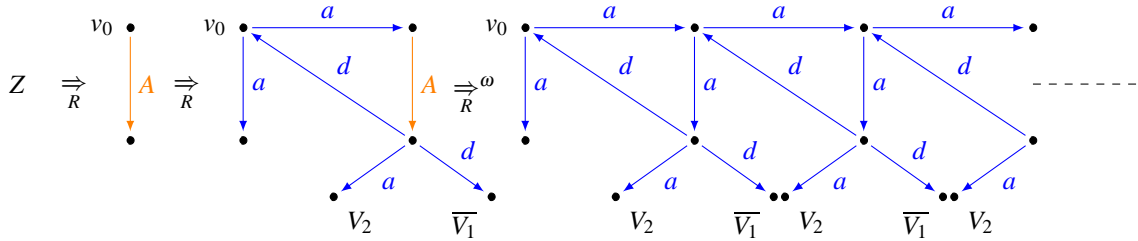
\begin{figure}[htbp]
    \begin{center}
      {\small
        \begin{tikzpicture}[scale=.75, auto=left]
          %
          
          \tikzstyle{every label}=[label distance= 0.5mm];
          \tikzstyle{sommets}=[draw,shape=circle,inner sep=0pt, outer sep = 0pt, fill= black,minimum size= 1mm];
          \tikzstyle{recriture}=[line width = 1pt, arrows = -latex];
          \tikzstyle{arct}=[arrows = -latex, shorten >=2pt, shorten <=2pt,color=blue];
          \tikzstyle{arcnt}=[arrows = -latex, shorten >=2pt, shorten <=2pt,color=orange];
      
          \draw 
          (-4,0) 
          node (Z) at +(0,1) {$Z$}
          (-2,0) 
          node (s0) at +(0,2) [sommets, label=left:$v_0$]{}
          node (t0) at +(0,0) [sommets]{} 
          (0,0) 
          node (s1) at +(0,2) [sommets, label=left:$v_0$]{}
          node (t1) at +(0,0) [sommets]{} 
          node (ABCr1) at +(3,2) [sommets]{} 
          node (ABq1) at +(3,0) [sommets]{} 
          node (Ap1) at +(1.6,-1) [sommets, label=below right:$V_2$]{}
          node (new11) at +(4.4,-1)[sommets, label=below left:$\overline {V_1}$]{}
      
          (5,0)
          node (s) at +(0,2) [sommets, label=left:$v_0$]{}
          node (t) at +(0,0) [sommets]{} 
          node (ABCr) at +(3,2) [sommets]{} 
          node (ABq) at +(3,0) [sommets]{} 
          node (Ap) at +(1.6,-1) [sommets, label=below right:$V_2$]{}
          node (new1) at +(4.4,-1)[sommets, label=below left:$\overline {V_1}$]{}
          node (ABCABCr) at +(6,2) [sommets]{} 
          node (ABCABq) at +(6,0) [sommets]{} 
          node (ABCAp) at +(4.6,-1) [sommets, label=below right:$V_2$]{}
          node (new2) at +(7.4,-1)[sommets, label=below left:$\overline {V_1}$]{}
          node (w) at +(9,2) [sommets]{} 
          node (v) at +(9,0) [sommets]{} 
          node (u) at +(7.6,-1) [sommets, label=below right:$V_2$]{}
          ++(9,1) -- +(1.9,0)[dashed, shorten >=5pt]
          (0,.8)
          node (e1) at +(-3,0){$ \chemin{}{R}$}
          node (e2) at +(-1,0){$ \chemin{}{R}$}
          node (e3) at +(4.1,0){$ {\chemin{}{R}}^\omega$}
          ;


          \draw [arct] 
          (ABq) edge node [midway] {$d$} (new1) 
          (ABq1) edge node [midway] {$d$} (new11) 
          (ABCABq) edge node [midway] {$d$} (new2)
          (s) edge node [midway] {$a$} (t)
          (s0) edge [arcnt] node [midway] {$A$} (t0)
          (s1) edge node [midway] {$a$} (t1)
          (ABq) edge node [midway] {$a$} (Ap)
          (ABq1) edge node [midway] {$a$} (Ap1)
          (s) edge node [midway] {$a$} (ABCr)
          (s1) edge node [midway] {$a$} (ABCr1)
          (ABq) edge node [midway,auto=right] {$d$} (s)
          (ABq1) edge node [midway,auto=right] {$d$} (s1)
          (ABCr) edge node [midway] {$a$} (ABq)
          (ABCr1) edge [arcnt] node [midway] {$A$} (ABq1)
          (ABCABq) edge node [midway,auto=right] {$d$} (ABCr)
          (ABCABq) edge node [midway] {$a$} (ABCAp)
          (ABCr) edge node [midway] {$a$} (ABCABCr)
          (ABCABCr) edge node [midway] {$a$} (ABCABq)
          (v) edge node [midway,auto=right] {$d$} (ABCABCr)
          (ABCABCr) edge node [midway] {$a$} (w)
          (v) edge node [midway] {$a$} (u);

        \end{tikzpicture}
      }
      \caption{Application of successive complete parallel rewritings and
        the  generated graph.}
      \label{fig:rewrite}
    \end{center}
  \end{figure}
\end{example}

\subsection{Basic Properties and Normal Forms for Regular Graphs}
\label{subsec:basics-hrgg}
For any rule $(H_A, \injec_A)$, we say that the vertices
$\injec_A(\{1,\cdots,\varrho(A)\})$ are the {\em inputs} of $H_A$, and
$\bigcup_{Y\in H_A \wedge Y(1)\in N_R}V_Y$ are the {\em outputs} of
$H_A$. In particular, output vertices belong to non-terminal
hyperedges.

Given a non-terminal $A\in N$, we denote by $\Succ(A)$ the set of
non-terminals appearing in $H_A$.

Given a \hrgg $\calG=(N, T, R, \axio)$ and a non-terminal hyperarc $X
= Av_1v_2\ldots v_p$, we introduce notations $R^\omega$
(resp. $R^\omega[X]$) to denote a particular graph in $\calG^\omega$
(resp. in $(\calG[X])^\omega$ with $\calG[X] := (N,T,R,X)$).

Let $\calG=(N, T, R, \axio)$ and $\calG'=(N', T', R', \axio')$ be two
\hrggs we say that $\calG'$ is a {\em colouring} of $\calG$ if, for
any graphs $H\in {\calG}^\omega$ and $H'\in {\calG'}^\omega$, there is
a graph isomorphism between $H$ and $H'$ which also preserves colours
of $H$, and there is a {\em colour} in $T'_1$ which does not belong to $T_1$.

We conclude these preliminaries by giving a normal form for
{\hrgg}s.

\begin{theorem}{\cite{caucal:gragra}}\label{theo:infinite}
  Any regular hypergraph can be generated in an effective way by a
  complete outside grammar.
\end{theorem}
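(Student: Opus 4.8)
The plan is to begin from an arbitrary grammar generating the given regular hypergraph and to transform it, through a finite sequence of effective and isomorphism-preserving steps, into one that is \emph{complete} and \emph{outside}. Recall that, in the standard terminology, a rule is \emph{outside} when no terminal arc of $H_A$ links two input vertices $\injec_A(i),\injec_A(j)$, and that \emph{completeness} requires dually that, whenever a non-terminal hyperarc is introduced, the terminal arcs relating the vertices at its input positions are already present beside it. The whole difficulty therefore concentrates on the terminal arcs and colours carried by input vertices: the transformation must \emph{push them outside}, one rewriting level up, so that they are produced by the context rather than by the rule itself. As a harmless first step I would put the grammar in reduced form, discarding every non-terminal that is unreachable from $\axio$ or non-productive; this is a routine accessibility computation that leaves $\calG^\omega$ unchanged up to isomorphism.

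The central ingredient is a boundary computation. For each non-terminal $A$, let $\beta(A)$ be the set of terminal arcs and colours that ultimately appear among the inputs $\injec_A(\{1,\dots,\rho(A)\})$ in the graph $R^\omega[A]$. Since the number of inputs is bounded by $\max_A \rho(A)$ and the terminal alphabet $T$ is finite, there are only finitely many such boundary configurations. Each $\beta(A)$ can be obtained by a saturation procedure over the finite set of canonical vertices $\bigcup_{A\in N} V_{H_A}$: an arc between two inputs of $A$ belongs to $\beta(A)$ iff it is produced directly in $H_A$, or it is a boundary arc of some $B\in\Succ(A)$ two of whose relevant positions are glued, through $\injec$, to those two inputs of $A$. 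Iterating this implication until stabilisation terminates because the candidate set is finite, which is what makes the whole construction effective.

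Next I would refine the non-terminals so as to record and relocate this information. Concretely, replace each $A$ by annotated copies remembering which arcs of $\beta(A)$ are to be supplied from the outside, and rewrite the rules so that (i) no terminal arc joins two inputs inside any rule, establishing the outside property, and (ii) every non-terminal hyperarc $Bv_1\cdots v_p$ occurring in a rule is accompanied by the arcs of $\beta(B)$ drawn on $v_1,\dots,v_p$, establishing completeness. The arcs removed from inside a rule are exactly those reintroduced by the enclosing occurrence, and since $\axio$ has arity $0$ nothing is lost at the top level; hence the union $\bigcup_n[H_n]$ is unchanged. One checks by induction on $\level(v)$ that the refined derivation produces, level by level, the same terminal arcs and colours as the original, so that the generated graphs are isomorphic.

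The delicate point, and the step I expect to be the main obstacle, is the simultaneous consistency of the two requirements across all recursion depths: the boundary arcs emitted from the outside must coincide \emph{exactly}, vertex for vertex, with those suppressed on the inside, at every level of the infinite derivation. This is precisely what the fixpoint characterisation of $\beta(A)$ is designed to guarantee, but verifying it rigorously means tracking the identification of inputs and outputs through $\injec$ under the injective morphisms $h$ of the rewriting relation $\versdans{R,A}$, and arguing that no boundary arc is ever duplicated nor omitted. Once this bookkeeping is carried out, the resulting grammar is complete, outside, effectively constructible, and equivalent to the original, which proves the theorem.
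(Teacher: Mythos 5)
You should first note that the paper contains no proof of this statement: it is imported verbatim from Caucal's survey \cite{caucal:gragra}, so your attempt can only be measured against the construction there and against the gloss the paper gives immediately after the theorem. That gloss shows your proof attacks the wrong property. In the terminology used here, an \emph{output} is a vertex lying on a non-terminal hyperarc of a right-hand side, and \emph{complete outside} means (i) the only inputs $\injec_A(i)$ that are also outputs are vertices of infinite degree, and (ii) each output vertex belongs to a single non-terminal hyperarc. Your opening paragraph instead takes ``outside'' to mean that no terminal arc of $H_A$ links two inputs, and ``complete'' to mean that the terminal arcs between the attachment vertices of a non-terminal occurrence are already present beside it. That is a different normal form, and the transformation you build — relocating terminal arcs between inputs one level up via the boundary sets $\beta(A)$ — establishes neither (i) nor (ii). In particular it does not deliver the consequence the paper actually relies on later (in the proof of Theorem~\ref{th:quali}): that every arc incident to a vertex generated at level $n$ is present by level $n+1$, i.e.\ that finite-degree vertices leave the non-terminal frontier after one rewriting step.

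Even redirected at the right property, your machinery is insufficient. The sets $\beta(A)$ only record arcs \emph{between} input positions, of which there are indeed finitely many; but the real obstruction to the outside property is an input that is passed down to a non-terminal of $\Succ(A)$ and keeps receiving terminal arcs to \emph{fresh} vertices created at deeper levels of $R^\omega[A]$. Such arcs cannot be pushed one level up — their other endpoint does not yet exist in the enclosing rule — and a saturation over pairs of input positions never sees them. The correct construction must (a) decide, for each vertex that is both input and output, whether its degree in the generated graph is finite or infinite; this is where the finiteness and effective computability of vertex degrees (Proposition~3.13 of \cite{caucal:gragra}, the same tool the paper invokes for Proposition~\ref{prop:phrgg}) enters, and it is absent from your proposal; (b) for the finite-degree vertices, compute a bound on the level after which no new incident arc appears, unfold the grammar that many levels so that all such arcs are produced in the rule where the vertex is an input, and split off fresh non-terminals that no longer carry the vertex; and (c) exempt the infinite-degree vertices, which necessarily remain on non-terminal hyperarcs forever — this exemption is exactly why the statement is phrased with the infinite-degree proviso. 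Your proposal contains no degree analysis, no unfolding, no treatment of the infinite-degree exception, and nothing ensuring condition (ii) (outputs shared by several non-terminal hyperarcs must be separated), so the concluding induction on $\level(v)$ cannot be carried out as claimed.
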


The {\em complete outside} property ensures that the only input
vertices that are also outputs are vertices of infinite degree. It
also implies that each output vertex belongs to a single non-terminal
hyperarc. This property enables one to identify efficiently grammars
having vertices of infinite degree, and it also ensures that whenever
there is no such vertex, inputs and outputs are distinct.
In the sequel we assume that all {\hrgg}s we consider are complete
outside.

\subsection{Probabilistic Regular Graphs}

In order to obtain a probabilistic graph from one generated by a
\hrgg, we define, for each \hrgg $\calG$, and each graph $H$
in $\calG^\omega$, the counting function $\compte : V_H\times T_2\vers
\nit$, with $\compte(v,a)=\abs{\{v'\ |\ v\larc a v'\}}$, that
associates with each pair $(v,a)$ the number of $a$-labelled arcs
originating from $v$. Observe that two distinct vertices $v$ and $v'$
in $H$ have identical valuations for $\compte$ as soon as
$\canon(v)=\canon(v')$. 

\begin{definition}[Probabilistic graph grammar]
  A {\em probabilistic hypergraph grammar} (\phrgg for short) $\calP$,
  is a pair $(\calG,\proba)$ where $\calG =(N, T, R, \axio)$ is a \hrgg,
  $\proba: T_2\vers [0,1]$ is a mapping, and for each vertex $v\in
  R^\omega$ the sum of the $\proba$-values of all arcs from $v$
  is $1$: $\sum_{a\in T_2} \proba(a)\compte (v,a) = 1$.
\end{definition}

\begin{remark}
  This definition obviously precludes vertices with infinite
  out-degree.  In fact, it is not straightforward to introduce a
  meaningful definition enabling vertices having infinite out-degree.
  On the contrary, vertices with infinite in-degree are acceptable
  with this definition.
\end{remark}

\begin{proposition}
\label{prop:phrgg}
Given a \hrgg $\calG$ and a mapping $\proba: T_2\vers [0,1]$, one can
decide whether $(\calG, \proba)$ is a \phrgg.
\end{proposition}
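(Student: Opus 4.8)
The plan is to reduce the universally-quantified membership condition to a finite computation, relying on the observation recorded just before the definition: the value $\compte(v,a)$ depends only on $\canon(v)$. Since $\canon(v)$ ranges over the finite set $\bigcup_{A\in N}V_{H_A}$, the function $\compte$ takes only finitely many values, and the defining condition $\sum_{a\in T_2}\proba(a)\compte(v,a)=1$ for all $v\in R^\omega$ reduces to the same equation quantified over the finitely many canonical vertices that actually occur in the generated graph. So the task splits into (i) computing $\compte(c,a)$ for each canonical vertex $c$ and each label $a$, and (ii) checking the resulting finite conjunction of linear equations.

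For step (i) I would exploit the complete-outside normal form. For a canonical vertex $c\in V_{H_A}$, write $\mathit{direct}(c,a)$ for the number of $a$-labelled terminal arcs leaving $c$ inside $H_A$; this is read off the rule directly. If $c$ is not an output vertex, then no copy of $c$ ever acquires further arcs, so $\compte(c,a)=\mathit{direct}(c,a)$. If $c$ is an output vertex, by the complete-outside property it lies on exactly one non-terminal hyperarc, say at position $i$ of a $B$-labelled hyperarc; rewriting that hyperarc identifies the copy of $c$ with the input $\injec_B(i)$ of $H_B$, and by the definition of parallel rewriting the terminal arcs already incident to the copy are retained while it inherits all outgoing arcs of $\injec_B(i)$. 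Setting $\mathit{next}(c):=\injec_B(i)$ — a well-defined partial function on canonical vertices precisely because the grammar is complete outside — we obtain the recurrence $\compte(c,a)=\mathit{direct}(c,a)+\compte(\mathit{next}(c),a)$.

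Iterating $\mathit{next}$ traces a walk in the finite set of canonical vertices. If the walk reaches a vertex that is not an output it terminates, and $\compte(c,a)$ is the finite sum of the $\mathit{direct}$-contributions along the walk. If instead the walk cycles — which, by the complete-outside property, can only occur through vertices that are simultaneously inputs and outputs, hence of infinite degree — the out-degree of $c$ is infinite, and such a vertex violates the finiteness implicit in the definition (cf.\ the remark following it), so $(\calG,\proba)$ is not a \phrgg. Cycles are detected after at most $\abs{\bigcup_{A\in N}V_{H_A}}$ steps, so each $\compte(c,a)$ is computed (as a natural number, or recognised as infinite) in finite time.

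Finally, I would restrict attention to the non-terminals reachable from $\axio$, computed from the successor relation $\Succ$, since only their canonical vertices actually appear in $R^\omega$, and then test the finitely many equations $\sum_{a\in T_2}\proba(a)\compte(c,a)=1$. Assuming the values of $\proba$ are given effectively (for instance as rationals), each test is a finite arithmetic check, and the conjunction over the finitely many reachable canonical vertices decides membership. The only genuinely delicate point is step (i): correctly propagating the accumulation of outgoing arcs through rewriting and certifying that a cyclic $\mathit{next}$-walk is exactly the infinite-out-degree case. This is precisely where the complete-outside normal form (Theorem~\ref{theo:infinite}) is indispensable, as it guarantees both that $\mathit{next}$ is a function and that cycling is synonymous with infinite degree.
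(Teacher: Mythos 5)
Your overall route is genuinely different from the paper's and mostly sound: where you unfold an explicit $\mathit{next}$-walk over canonical vertices and compute $\compte$ by the recurrence $\compte(c,a)=\mathit{direct}(c,a)+\compte(\mathit{next}(c),a)$, the paper instead invokes Proposition~3.13 of \cite{caucal:gragra} as a black box, obtaining an effective colouring of the grammar by the finitely many exact out-degrees and reading $\compte$ off the colours. Your construction essentially reproves the needed instance of that proposition, which is legitimate and more self-contained. However, your treatment of the cycle case contains a genuine error: you conclude that when the walk cycles, ``the out-degree of $c$ is infinite.'' The complete-outside property only guarantees that input-output vertices have infinite \emph{degree}, that is, the \emph{sum} of in- and out-degrees; and the remark following the definition of \phrgg explicitly states that infinite \emph{in}-degree is acceptable. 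A cycling walk whose cycle carries only incoming terminal arcs produces a vertex of infinite in-degree but finite out-degree, and your algorithm wrongly rejects exactly those instances the definition is careful to admit.

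Concretely, take $A$ of arity $1$, $H_\axio=\ens{A\,v,\ b\,v\,v}$ and $H_A=\ens{A\,x,\ a\,y\,x}$ with $\injec_A(1)=x$. This grammar is complete outside: $x$ is the only input-output vertex, and its incarnation $v$ has infinite degree, through its in-arcs. The generated graph is $v$ with a $b$-loop and infinitely many incoming arcs $y_k \larc{a} v$, each $y_k$ having that single out-arc. With $\proba(a)=\proba(b)=1$ the out-probabilities sum to one at every vertex, so $(\calG,\proba)$ is a \phrgg; yet the $\mathit{next}$-walk from $v$ cycles at $x$ (since $\mathit{direct}(x,\cdot)$ counts only the in-arc $a\,y\,x$, which contributes nothing to out-degree), and your procedure declares infinite out-degree and rejects. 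The repair is local: upon detecting a cycle, sum the direct \emph{out}-contributions $\mathit{direct}(\cdot,a)$ along it; if positive for some $a\in T_2$ the out-degree is indeed infinite and $(\calG,\proba)$ is not a \phrgg, while if zero for all labels, $\compte(c,a)$ equals the finite sum accumulated on the acyclic prefix of the walk and the equation test proceeds as you describe. With that correction (and keeping your restriction of the final test to the reachable, non-input canonical vertices, which is correct since $\canon$ never designates an input), your argument goes through.
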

\begin{proof}
  From Theorem~\ref{theo:infinite} we may assume that $\calG$ is
  complete outside. It enables to identify vertices of infinite
  out-degree. Let $v$ be such a vertex, and $a$ a label such that
  $\compte(v,a)=+\infty$, it forbids $(\calG,\mu)$ to be a \phrgg
  for any value of $\mu(a)$.
  If there is no such vertex, from Proposition~3.13~(b) of
  \cite{caucal:gragra}, there exists an effective colouring of
  $\calG=(N, T, R, \axio)$ with colours representing the degree of
  each vertex (relative to each label). We produce a colouring
  representing the exact {\em out}-degree relative to each element of
  $T_2$. There are only finitely many such degrees (from the same
  proposition, (a)). Now from these colours we are able to compute
  $\compte$ at each vertex $v$ in the grammar and therefore we may check
  that $\sum_{a\in T_2} \proba(a)\compte (v,a) = 1$.
\end{proof}

\begin{example}\label{ex:runningprob} 
  We consider the graph from Example~\ref{ex:running}.  The
  probabilistic mapping $\mu$, defined by $\mu(a)=\frac 1 2$ and
  $\mu(d)=\frac 1 4$, yields a probabilistic regular graph. Clearly
  the sum of out-going edges is $1$ for each vertex of the graph.
\end{example}

\subsection{Connection between regular graphs and pushdown automata}
\label{subsec:link}
There is a strong connection between regular graphs and configuration
graphs of pushdown automata. Indeed restricted to finite in- and
outdegrees, these graphs coincide: see, {\em
  e.g.},~\cite[Theorem~5.11]{caucal:gragra}. In particular, given a
pushdown automaton, the transformation into a graph grammar which
generates a infinite regular graph isomorphic to the configuration
graph of the pushdown system is straightforward and may be adapted
from the proof of Proposition~5.4 in~\cite{caucal:gragra}. This
proposition states that the suffix graph of any rewriting system may
be generated by a one rule grammar from the non-terminal. We
illustrate this construction on the following example.

\begin{example}\label{ex:pushdown}
  Let us consider the following pushdown system
  \[
    r \larc a Br' \quad  r' \larc a Ar \quad
    r' \larc b Ap \quad BAp \larc a p.
  \]
  To match more closely~\cite[Proposition~5.4]{caucal:gragra} it is
  presented as a suffix rewriting system: the state of the pushdown
  automaton is on the top of the stack, and rules are applied to
  suffixes of the stack. For example, when in state $r$, and whatever
  the contents of the stack, while reading an $a$, stack-symbol $B$ is
  pushed and the new state is $r'$. The transformation of this
  pushdown automaton into a graph grammar goes as follows. There is a
  unique non-terminal $X$ (which, hence, serves as axiom). The
  vertices of $H_X$ are words:
  each strict suffix (distinct from the empty
  suffix) of the words appearing in the rewriting rules (in the left-
  and right-hand sides) belongs to the image of $\injec_X$. Here $r$,
  $p$, $r'$ and $Ap$ are the non-empty strict suffixes and they are
  represented on the top line of the graph $H_X$. For every stack
  symbol (here $A$ and $B$), and every non-empty strict suffix, a
  vertex is formed by the concatenation of the stack symbol and the
  suffix. This yields new vertices, such as $Br$ and all the ones on
  the bottom line of $H_X$, but some vertices might already be
  present, as $Ap$ in this example. For each stack symbol, a
  non-terminal arc, labelled by $X$ connects these vertices:
  $Ar, Ap, Ar', AAp$ and $Br, Bp, Br', BAp$, respectively. This
  construction ensures that each left- and right-hand side of the
  rewriting rules is one vertex. It now suffices to add terminal
  arcs between the vertices according to the rules. For example the
  $a$-edge from $r$ to $Br'$ encodes the first rewriting rule. 
  \begin{center}
    \begin{tikzpicture}[scale=.75, auto=left]
      \tikzstyle{every label}=[label distance= 0.5mm];
      \tikzstyle{sommets}=[draw,shape=circle,inner sep=0pt, outer sep = 0pt, fill= black,minimum size= 1mm];
      \tikzstyle{recriture}=[line width = 1pt, arrows = -latex];
      \tikzstyle{arct}=[arrows = -latex, shorten >=2pt, shorten <=2pt,color=blue];
      \tikzstyle{arcnt}=[arrows = -latex, shorten >=2pt, shorten <=2pt,color=orange];
      
      \draw 
      node (HX) at (0,2) {$H_X$:}
      ++(2,2) 
      node (r) at +(0,0) [sommets, label=above:$r$, label=left:$\injec(1)$]{}
      ++(1.5,0)
      node (p) at +(0,0) [sommets, label=above:$p$, label=left:$\injec(2)$]{} 
      ++(1.5,0)
      node (rp) at +(0,0) [sommets, label=above:$r'$, label=left:$\injec(3)$]{}
      ++(1.5,0)
      node (Ap) at +(0,0) [sommets, label=above:$Ap$, label=right:$\injec(4)$]{}
      ++(-7.5,-2) 
      node (Br) at +(0,0) [sommets, label=below:$Br$]{}
      ++(1.5,0)
      node (Bp) at +(0,0) [sommets, label=below:$Bp$]{}
      ++(1.5,0)
      node (Brp) at +(0,0) [sommets, label=below:$Br'$]{} 
      ++(1.5,0)
      node (BAp) at +(0,0) [sommets, label=below:$BAp$]{}
      ++(1.5,0)
      node (Ar) at +(0,0) [sommets, label=below:$Ar$]{}
      ++(1.5,0)
      node (Arp) at +(0,0) [sommets, label=below:$Ar'$]{} 
      ++(1.5,0)
      node (AAp) at +(0,0) [sommets, label=below:$AAp$]{}
      ;
      \draw [arct] 
      (r) edge node [midway] {$a$} (Brp)
      (rp) edge node [midway] {$a$} (Ar)
      (rp) edge [bend angle=45,bend left] node [midway] {$b$} (Ap)
      (BAp) edge node [midway] {$a$} (p)
      ;
      \foreach \deb/\posit in {A/near end,B/midway}
      \draw [arcnt, bend angle=20, bend right]
      (\deb r) to (\deb p) to node [\posit,auto=right]{$X$} (\deb rp) to (\deb Ap);
    \end{tikzpicture}
  \end{center}
  Notice that this construction produces several connected
  components. Yet, given an initial configuration only the connected
  component (co-)reachable from this configuration will be relevant.
\end{example}

A similar transformation can be applied to any pushdown automaton in
order to obtain a graph grammar which generates the configuration
graph of the pushdown system. This underlines the generality of the
model of graph grammars.
Moreover, we argue the framework of graph grammars is more convenient
than the pushdown automata view. Indeed,
transformations presented in Subsection~\ref{subsec:basics-hrgg} on
graph grammars do not affect the graph they generate, contrary to most
transformations on pushdown automata that affect the structure of the
configuration graph.

Esparza \emph{et al.} propose in~\cite{EKM06} a model of probabilistic
pushdown automata, derived from pushdown automata by assigning weights
to rules. The configuration graphs of such systems are infinite state
Markov chains. Probabilistic pushdown automata and \phrgg relate in
the same way than pushdown automata and graph grammars do:
the Markov chains defined by both models are the same. Moreover, any
probabilistic pushdown automaton can be turned into a \phrgg which
generated exactly the same infinite state Markov chain. In this sense
our model does not generalize the previous model. On the other hand,
\cite{EKM06} makes several syntactical assumptions on pushdown
automata which do not restrict the class of Markov chains, but make it
more difficult to manipulate. Transformations of probabilistic
pushdown automata in order to fit these assumptions may alter the
properties of the Markov chain. On the contrary, transformations of
\phrggs do not affect the Markov chain generated.

\section{Verification of \prgs}
\subsection{Markov chains and PCTL}

A (discrete-time) \emph{Markov chain} is a tuple $\mathcal{M} =
(S,s_0,p)$ consisting of a (possibly infinite) set $S$ of states, an
initial state $s_0$
, and a probabilistic transition function $p : S
\times S \rightarrow [0,1]$ such that for every state $s$, $\sum_{s'
  \in S} p(s,s') =1$. For simplicity, we assume the transition system
is finitely branching, \emph{i.e.}, in any state $s$ there are only
finitely many states $s'$ with $p(s,s')>0$; the condition $\sum_{s'
  \in S} p(s,s') =1$ is thus well-defined. Given a set of atomic
propositions $\AP$, a \emph{labelled Markov chain} $\mathcal{M} =
(S,s_0,p,\ell)$ is a Markov chain $(S,s_0,p)$ equipped with a
labelling function $\ell : S \rightarrow \AP$.

Introduced in~\cite{HJ-fac94}, PCTL is an extention of CTL with
probabilities. It can express quantitative properties about
executions in Markov chains, \emph{e.g.}, with probability $0.9$ any
sent message will be acknowledged in the future. The syntax of PCTL is
the following:
\[
\varphi ::= \texttt{tt} \ |\ a\ |\ \neg \varphi\ |\ \varphi \wedge
\psi\ |\ \Next^{\sim \rho} \varphi\ |\ \varphi \Until^{\sim \rho} \psi
\]
where $a \in \AP$ is an atomic proposition, $\rho \in [0,1]$ and $\sim
\in \{\leq,<,>,\geq\}$. Operators $\Next^{\sim \rho}$ and
$\Until^{\sim \rho}$ are respectively the probabilistic next-state and
until operators and generalise their nonprobabilistic
counterparts. Recall the shortcuts in CTL for eventually ($\F$) and
globally ($\G$): $\F \varphi \equiv \texttt{tt} \Until \varphi$ and $\G
\varphi \equiv \neg \F \neg \varphi$. Their probabilistic extensions
$\F^{\sim \rho}$ and $\G^{\sim \rho}$ will also be convenient in the
sequel.

Let $\mathcal{M} = (S,s_0,p,\ell)$ be a labelled Markov chain, and $s
\in S$. For a (non-probabilistic) formula $\phi$ of CTL, we write
$\prob(s \models \phi)$ for the measure of the set of paths in
$\mathcal{M}$ issued from $s$ and which satisfy $\phi$. Note that for
$V_1$ and $V_2$ sets of states, the set of paths from $s$ satisfying
$\Next V_1$ or $V_1 \Until V_2$ is clearly measurable. The semantics
of a PCTL formula $\varphi$ over $\mathcal{M}$ is defined inductively:

\noindent
\begin{center}
\begin{tabular}{ll}
  $\sem{{\texttt{tt}}} = S$ & \quad\quad $\sem{\varphi \wedge \psi} = \sem{\varphi} \cap \sem{\psi}$\\
  $\sem{a} = \{s \in S\ |\ a \in  \ell(s)\}$& \quad\quad $\sem{\Next^{\sim \rho} \varphi} = \{s \in S\ |\ \prob(s \models
  \Next \sem{\varphi}) \sim \rho\}$\\
  $\sem{\neg \varphi} = S \setminus \sem{\varphi}$ 
  & \quad\quad
  $\sem{\varphi \Until^{\sim \rho} \psi} = \{s \in S \ |\ \prob(s
  \models \sem{\varphi} \Until \sem{\psi}) \sim \rho\}$\\
$\sem{\F^{\sim \rho} \varphi} = \{s \in S\ |\ \prob(s \models \F
\sem{\varphi}) \sim \rho\}$ & \quad \quad $\sem{\G^{\sim \rho} \varphi} = \{s
\in S\ |\ \prob(s \models \G \sem{\varphi}) \sim \rho\}$
\end{tabular}
\end{center}

\noindent
and we write $s \models \varphi$ for $s \in \sem{\varphi}$.

In the following, we will interpret PCTL formulae over labelled Markov
chains induced by \phrgg. Atoms in these
formulae will be sets of vertices and will form the set of atomic
propositions $\AP$.
%
\begin{example}\label{ex:pctl}
  Considering the graph presented in Example~\ref{ex:running}, the
  probabilistic mapping given in Example~\ref{ex:runningprob}, and
  predicates $V_1$ and $V_2$ satisfied by vertices labelled by these
  respective colours, the following formulae are of interest:
  \begin{itemize}
  \item $\varphi_1 = V_1\wedge\Next^{\pgd \frac 1 2} V_2$: Vertices
    that satisfy $\varphi_1$ belong to $\conf1$ and with probability
    greater than $\frac{1}{2}$, their successors in one step are in
    $\conf2$. In particular, vertices at a fork on the lower line of
    Figure~\ref{fig:rewrite} satisfy $\varphi_1$.
  \item $\varphi_2 = v_0\wedge {V_1 \Until^{>\frac 2 3} V_2}$: Vertex
    $v_0$ satisfies $\varphi_2$ if the probability of all paths issued
    from $v_0$ that eventually reach $V_2$ passing through vertices of
    $V_1$ only is greater than $\frac{2}{3}$.
  \end{itemize}
\end{example}

\subsection{Qualitative model checking for \prgs}

The qualitative fragment of PCTL only involves the probability
thresholds $0$ and $1$.  Let $\calP= (N, T, R, \axio,\mu)$ be a
\phrgg. Up to isomorphism $\calP$ generates a unique infinite state
Markov chain $\M_\calP$ (or $\M$ when there is no ambiguity on
$\calP$).  The qualitative model checking problem for \prgs is, given
a \phrgg $\calP$ with initial vertex $v_0$ and a qualitative PCTL
formula $\varphi$, to answer whether in $\M_\calP$, $v_0 \models
\varphi$. Mimicking the finite Markov chain approach, the set of
vertices satisfying a qualitative formula can be effectively computed.

\begin{theorem}
\label{th:quali}
Let $\fifi$ be a qualitative PCTL formula, and $\calP$ a \phrgg. There
is an effective colouring $\calP'$ in which the set $\ens{v\in V_G\ |\
  v\models \fifi}$ is identified by a new colour.
\end{theorem}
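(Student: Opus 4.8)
The plan is to prove the theorem by structural induction on the qualitative PCTL formula $\fifi$, building up the colouring $\calP'$ one subformula at a time. The base cases are immediate: for $\fifi = \texttt{tt}$ the colour marks every vertex, for $\fifi = a$ (an atomic proposition, i.e.\ a set of vertices given by a colour in the grammar) the colour is already present, and the Boolean cases $\neg\fifi$ and $\fifi\wedge\psi$ are handled by combining previously computed colours. Since the grammar has finitely many canonical vertex types (elements of $\bigcup_{A\in N}V_{H_A}$) and colourings are closed under such effective operations by Proposition~3.13 of~\cite{caucal:gragra}, each of these steps produces a new grammar colouring in which the relevant vertex set is marked by a fresh colour.

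The heart of the proof is the two probabilistic modalities restricted to thresholds $0$ and $1$. First I would reduce the $\Next^{\sim\rho}$ cases: since $\calP$ is a \phrgg and the out-degree of each vertex is finite and determined by $\canon(v)$ together with the counting function $\compte$, the one-step probability $\prob(v\models\Next\sem{\fifi})$ is a finite sum $\sum_{a}\proba(a)\,\abs{\{v'\mid v\larc a v'\wedge v'\models\fifi\}}$. Whether this sum is $0$, strictly positive, strictly less than $1$, or exactly $1$ depends only on which terminal arcs leave $v$ and whether their targets carry the colour for $\fifi$ — information that is locally determined by the rule $(H_A,\injec_A)$ in which $v$ appears, once $\fifi$ has been coloured. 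Hence each qualitative $\Next$-threshold is again expressible as a colouring operation on the grammar.

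The main obstacle is the qualitative until operator $\fifi\Until^{\sim\rho}\psi$ for $\rho\in\{0,1\}$, which requires a global reachability analysis rather than a local one. Following the finite Markov chain methodology and the techniques of~\cite{EKM06}, I would characterise the sets $\sem{\fifi\Until^{=0}\psi}$, $\sem{\fifi\Until^{>0}\psi}$ and $\sem{\fifi\Until^{=1}\psi}$ in graph-theoretic terms. The positive-probability set $\prob(v\models\fifi\Until\psi)>0$ is purely a reachability property: $v$ can reach a $\psi$-vertex along a $\fifi$-path, which is decidable on regular graphs because reachability sets are effectively regular and, crucially, expressible in MSO, which is decidable over regular graphs~\cite{courcelle}. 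The probability-one set is the delicate part: one must exclude vertices from which a $\fifi$-path can escape the target region forever, i.e.\ the almost-sure characterisation is ``$v$ reaches $\psi$ through $\fifi$ and from every $\fifi$-reachable vertex the $\psi$-region is still reachable.'' Both conditions are finitary constraints over the transition structure and, by the finiteness of the canonical vertex set together with the MSO-definability of reachability on regular graphs, they pull back to colourings of the grammar. The key technical effort will be to show that these reachability and almost-sure-reachability predicates, which \emph{a priori} range over the infinite graph $\M_\calP$, stabilise at the level of canonical vertices and levels $\level(v)$, so that they can be computed by a fixpoint iteration over the finitely many rules of $\calP$ and realised as an effective colouring $\calP'$ in the sense of Subsection~\ref{subsec:basics-hrgg}.
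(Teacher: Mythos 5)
Your induction skeleton, the base and Boolean cases, the local treatment of the qualitative $\Next$ thresholds, and the zero/positive-probability until cases all agree with the paper's proof and are sound: in particular $\prob(v \models \conf1 \Until \conf2) > 0$ holds iff some finite path witnesses the until, so it is a genuine reachability property and your MSO argument applies there. The gap is in the almost-sure case. Your characterisation --- ``$v$ reaches $\sem{\psi}$ through $\sem{\fifi}$, and from every $\fifi$-reachable vertex the $\psi$-region is still reachable'' --- is the correct criterion for \emph{finite} Markov chains, where qualitative properties depend only on the topology of the chain, but it fails on the infinite chains generated by {\phrgg}s. Consider the biased random walk on a half-line, which is a regular graph: label right-moving arcs $a$ and left-moving arcs $b$, and set $\mu(a)=p$, $\mu(b)=1-p$. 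The origin is reachable from every vertex, and every reachable vertex can still reach it, yet $\prob(v \models \F \mathit{origin})=1$ holds iff $p \leq \frac{1}{2}$. Your criterion, and indeed any structural predicate definable in MSO over the labelled graph, is independent of the numerical values of $\mu$, so no such characterisation can capture the probability-one set; no fixpoint iteration over canonical vertices that only consults the transition structure can repair this. The paper's running example illustrates the same phenomenon: the until probability there is $\frac{2}{3}(2\sqrt{3}-3) \approx 0.31$ for $a=\frac{1}{2}$, $d=\frac{1}{4}$, a value strictly between $0$ and $1$ that the structure alone does not determine.

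The paper's proof instead injects quantitative information into the qualitative analysis. It relies on the probabilities $\Win(B_i)_A$ (satisfying $\conf1 \Until \conf2$ without decreasing level) and $\Dec(B_i,A_j)$ (decreasing level while staying in $\conf1 \setminus \conf2$), which by Theorem~\ref{th:least-sol} form the least solution of an effectively constructed system of polynomial equations; comparisons such as $\Win(B_i)_A = 1$, $\Dec(B_i,A_j)>0$, or $\Win(B_i)_A + \sum_{j} \Dec(B_i,A_j) = 1$ are then decidable in the first-order theory of the reals \cite{tarski51}. With these tests available, the almost-sure set is obtained by a finite fixpoint over the finitely many canonical vertices: $W_0$ collects the $\conf2$-vertices and those with $\Win(B_i)_A = 1$, and $W_{n+1}$ adds the vertices for which $\Win(B_i)_A + \sum_{A_j \in R((B,i)_A)} \Dec(B_i,A_j) = 1$ and all decrease-targets already lie in $W_n$; the union $\bigcup_n W_n$ is the desired set and is realised as a colouring of the grammar. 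To complete your proof you would have to replace the MSO-reachability analysis of the $=1$ case by arithmetic tests of exactly this kind on the least solution of the equation system, as in \cite{EKM06}; the rest of your argument can stand as written.
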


\begin{proof}
  The proof is by induction on the structure of $\fifi$, using the
  fact that the following sets of vertices can be effectively coloured
  in the graph grammar: $\ens{v\in V_G\ |\ \prob(v, \Next{}
    \conf{})=1}$, $\ens{v\in V_G\ |\ \prob(v, \Next{} \conf{})=0}$,
  $\ens{v\in V_G\ |\ \prob(v, \conf{1} \Until{} \conf{2})=1}$ and
  $\ens{v\in V_G\ |\ \prob(v, \conf{1} \Until{} \conf{2})=0}$.

  Let us start with the two first cases: $\ens{v\in V_G\ |\ \prob(v,
    \Next{} \conf{})=1}$ and $\ens{v\in V_G\ |\ \prob(v, \Next{}
    \conf{})=0}$.
  The function $\canbis$ induces a finite partition on vertices of the
  infinite Markov chain generated by $\calP$. Two vertices with same
  image by $\canbis$ have equivalent successors. By hypothesis on the
  grammar, for every vertex generated at level $n$, all successor
  vertices are generated between levels $n-1$ and $n+1$. Hence, if $v$
  is generated in $H_A$, it is sufficient to identify in $R^2[A]$
  whether all successors of $v$ belong to $\conf{}$ or
  $\overline{\conf{}}$. One can thus, in the hypergraphs $H_A$ (for
  each $A \in N$), annotate by colours the vertices which have all
  their successors in $\conf{}$, as well as those which have no
  successors in $\conf{}$. These colours precisely correpond to the
  sets $\ens{v\in V_G\ |\ \prob(v, \Next{} \conf{})=1}$ and $\ens{v\in
    V_G\ |\ \prob(v, \Next{} \conf{})=0}$.

  The two other cases $\ens{v\in V_G\ |\ \prob(v, \conf1 \Until
    \conf2)=1}$ and $\ens{v\in V_G\ |\ \prob(v, \conf1 \Until
    \conf2)=0}$ are treated similarly. We detail here the colouring of
  $\ens{v\in V_G\ |\ \prob(v, \conf1 \Until \conf2)=1}$. 
For $B \in \Succ(A)$ and $i \leq \rho(B)$
we let $R((B,i)_A) = \{A_j |  \Dec(B_i,A_j) >0\}$. 
We then define inductively the sets:
\begin{itemize}
\item $W_0 = (H_\axio \cap \conf2) \cup \{v \,|\, \canon(v) =
  (B,i)_A \textrm{ and } \Win(B_i)_A =1 \}$, and
\item $W_{n+1} = W_n \cup \{v\, |\, \canon(v) = (B,i)_A \textrm{ and }
  \Win(B_i)_A + \sum_{A_j \in R((B,i)_A)} \Dec(B_i,A_j) =1 \textrm{ and }
   R((B,i)_A) \subseteq W_{n} \}$.
\end{itemize}
Vertices in $W_0$ are directly winning, either because they already
belong to $\conf2$ or because from $B_i$ in context $A$, the
probability to win without decreasing level is $1$. Vertices in
$W_{n+1}$ are also almost surely winning (\emph{i.e.} satisfy $\conf1
\Until \conf2$ with probability $1$) because they are winning without
decreasing level (factor $\Win(B_i)_A$) or firstly decreasing level and
then win from $A_j$ with probability $1$ (since $A_j \in W_n$).

Clearly, $\bigcup_{n=0}^\infty W_n = \ens{v\in V_G\ |\ \prob(v, \conf1
  \Until \conf2)=1}$ and the $W_n$'s can be iteratively computed and
annotated in the grammar by colours.
\end{proof}

\subsection{Probability computation for \prgs}
We now face the problem of computing, given $v_0$ an initial vertex in
$H_Z$ and $\phi$ a CTL formula, the probability in $\M_\calP$ of the
set of paths starting in $v_0$ and satisfying $\phi$:
$\prob_{\M_\calP}(v_0 \models \phi)$. This can be done inductively on
the structure of $\phi$, and the difficult part amounts to
computing, given $\conf1$ and $\conf2$ colours, the probability
starting in $v_0$ to satisfy $\conf1 \Until \conf2$, written
$\prob(v_0 \models \conf1 \Until \conf2)$. This subsection focuses on
solving this problem.

\subsubsection{Preliminaries and notations}
Without loss of generality we assume that vertices of $\conf1$ and
$\conf2$ are annotated in the grammar by colours (terminals of arity
$1$) and that $v_0$ appears in $H_Z$ the hypergraph of the rewriting
rule associated to the axiom $\axio$ of $\calP$. Using the levelwise
decomposition of the Markov chain $\mathcal{M}_{\mathcal{P}}$, we show
how to express $\prob(v_0 \models \conf1 \Until \conf2)$ as a solution
of a system of polynomial equations derived from the axiom and the
rules.

The hypotheses we demand on \phrggs ensure that the first step of any
path issued from a vertex of level $n$ either remains at level $n$ or
reaches one of the neighbour levels, $n-1$ and $n+1$ (from
Theorem~\ref{theo:infinite}, it corresponds to restricting to finite
degree). This fact will enable levelwise decomposition of paths in the
Markov chain.

To compute probabilities in Markov chains generated by \phrggs we
exploit the regularities of the underlying graphs. 
For $v$ a vertex of $\M_\calP$ with $\canon(v) \in H_A$, we write
$\M[v]$ for the part of $\M_\calP$ with underlying graph $R^\omega[A]$
which contains $v$ and no vertices of level $\level(v)
-1$. Intuitively, if $v$ has been generated by a non-terminal $A$, we
consider the infinite (sub-)Markov chain generated from this
non-terminal. For two vertices $v$ and $v'$ of $\M_\calP$ with
$\canon(v) = \canon(v') \in H_A$, the isomorphism of $\M[v]$ and
$\M[v']$ ensures that for any CTL formula $\phi$, $\prob_{\M[v]}(v
\models \phi) = \prob_{\M[v']}(v' \models \phi)$.  In particular, if
$\phi$ is the formula $\conf1 \Until \conf2$, we obtain that:
the probability to succeed satisfying $\conf1 \Until \conf2$ without
decreasing level is the same from $v$ and from $v'$. The probability
to satisfy $(\conf1 \setminus \conf2)$ while decreasing level of $1$
is also independent of the level, provided the initial state
corresponds to a fixed canonical representant $(B,i)_A$. This
motivates the introduction of notations for such probabilities, that
are determined by the context and are independent of the level.

Let $A,B \in N$ be non-terminals such that $B \in \Succ(A)$. Starting
in state $v$, with $\canbis(v) = (B,i)_A$, each successor state
belongs to $R^2[A]$, the sub-graph obtained from non-terminal $A$ by
two successive complete parallel rewritings. Given $i \leq
\rho(B)$ and $j \leq \rho(A)$ we introduce:
\begin{itemize}
\item $\Dec(B_i,A_j)$ as the probability from $v'$, with $\canbis(v') =
  (B,i)_A$, to reach $v$ such that $\level(v) = \level(v') -1$ and
  $v = \injec_A(j)$
  satisfying along the path:
  $(\conf1 \setminus \conf2) \cap (\level \geq \level(v'))$;
\item $\Win(B_i)_A$ as the probability from $v'$, with $\canbis(v') =
  (B,i)_A$, to fulfill $(\conf1 \cap \level \geq \level(v'))\Until
  \conf2$.
\end{itemize}
(Here $\level \geq k$ denotes that the current level is greater than a
 given natural $k$.)

As explained before, $\Dec(B_i,A_j)$ and $\Win(B_i)_A$ do not depend on
$v'$ and $v$ but only on their images by $\canbis$. Moreover,
$\Dec(B_i,A_j)$ expresses the probability to decrease level by one
while satisfying a given property and $\Win(B_i)_A$ is the probability
to win, \emph{i.e.}, to fulfill $\conf1 \Until \conf2$ without
decreasing level. This justifies the chosen notations. 

The levelwise decomposition of paths is given by vertices belonging
(when generated) to non-terminal. Thus, given $A,B,C,D \in N$ such
that $B,D \in \Succ(A)$ and $C \in \Succ(B)$, we introduce notations
for some probabilities that can be computed directly in any portion
$R^2[A]$ of the Markov chain.
\begin{itemize}
\item $p(B_i)_A$ is the probability in $R^2[A]$ from $v$ with
  $\canbis(v) = (B,i)_A$ to fulfill $\conf1 \Until \conf2$ without
  visiting any $v' = \injec_A(j)$ nor $v'$ with $\canbis(v') \in
  \{(C,k)_B,(B,h)_A\}$.
\item $p(B_i,D_h)_A$ is the probability in $R^2[A]$ from $v$ with
  $\canbis(v) = (B,i)_A$ to fulfill $\G (\conf1 \setminus \conf2)$ and
  reach $v'$ with $\canbis(v') =(D,j)_A$ before any $v''$ such that
  $\canbis(v'') \in \{(B,l)_A,(D,h')_A\}$ and $\level(v'') =
  \level(v)$.
\item $\overleftarrow{p}(B_i,A_j)$ is the probability in $R^2[A]$ from
  vertex $v$ with $\canbis(v) = (B,i)_A$ to reach $v'$ with
  $v' = \injec_A(j)$ and $\level(v') = \level(v)-1$ and satisfy 
  $\G(\conf1 \setminus \conf2)$ without seeing any $v'' \in
  \{(C,k)_B,(B,l)_A,(D,h)_A\}$.
\item $\overrightarrow{p}(B_i,C_k)_A$ is the probability in $R^2[A]$
  from $v$ with $\canbis(v) = (B,i)_A$ to reach $v'$ with $\canbis(v')
  = (C,k)_B$ satisfying $\G (\conf1 \setminus \conf2)$ without
  visiting any $v'' = \injec_A(j)$ nor $v'' \in \{(C,k')_B,(B,h)_A\}$.
\end{itemize}

Intuitively, there are several alternatives for paths starting in $v$
(with $\canbis(v) = (B,i)_A$) and for which $\conf1 \Until \conf2$ is
not falsified: either they satisfy $\conf1 \Until \conf2$ without
visiting any vertex at some position on a non-terminal hyperarc, or
they satisfy $\G \conf1 \setminus \conf2$ and reach some vertex $v'$
at a given position on a non-terminal hyperarc. The above
probabilities split these cases according the first $v'$ encountered:
$v'$ can be at the level of $v$ (at the $h$-th position in hyperarc
$D$), or at levels $n-1$ (thus of the form $\injec_A(j)$) or $n+1$ (at
the $k$-th position in hyperarc $C$).
As argued before, $p(B_i)_A$, $p(B_i,D_j)_A$,
$\overleftarrow{p}(B_i,A_j)$, and $\overrightarrow{p}(B_i,C_k)_A$ can
be computed directly in $R^2[A]$, obtained from $H_A$, $H_B$, and
$H_E$ for all $E\in Succ(A)\cup Succ(B)$.

\begin{example} 
  We compute these probabilities on
  Example~\ref{ex:running}: $p(A_2)_A = a$, $p(A_1,A_2)_A = a$,
  $\overleftarrow{p}(A_2,A_1) = d$ and $\overrightarrow{p}(A_1,A_2)_A=0$.
 \end{example}

\subsubsection{Computation of $\prob(v_0 \models \conf1 \Until \conf2)$}

\begin{theorem}
\label{th:least-sol}
The $\Dec(B_i,A_j)$'s and $\Win(B_i)_A$'s satisfy the following
equations:
\begin{eqnarray}
\label{eq:dec}
\Dec(B_i,A_j) = \overleftarrow{p}(B_i,A_j) + \sum_{D_h} p(B_i,D_h)_A \cdot \Dec(D_h,A_j)
+ \sum_{C_k} \overrightarrow{p}(B_i,C_k)_A \cdot \sum_{B_\ell} \Dec(C_k,B_\ell)
\cdot \Dec(B_\ell,A_j)\\
\label{eq:win}
\Win(B_i)_A = p(B_i)_A + \sum_{D_h} p(B_i,D_h)_A \cdot \Win(D_h)_A
+ \sum_{C_k} \overrightarrow{p}(B_i,C_k)_A \Bigl( \Win(C_k)_B +
\sum_{B_j} \Dec(C_k,B_j) \cdot \Win(B_j)_A\Bigr).
\end{eqnarray}
Moreover, if we add the following constraints:
\begin{itemize}
\item if $B_i \notin (\conf1 \setminus \conf2)$ then $\Dec(B_i,A_j)=0$ for
  every $A_j$, and
\item if $B_i \in \conf2$ then $\Win(B_i)_A=1$, and if $B_i \notin
  (\conf1 \cup \conf2)$ then $\Win(B_i)_A=0$;
\end{itemize}
the $\Dec(B_i,A_j)$'s and $\Win(B_i)_A$'s form the least solution of
this system of polynomial equations.
\end{theorem}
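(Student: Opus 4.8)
The plan is to prove this theorem in two parts. First, I would establish that the quantities $\Dec(B_i,A_j)$ and $\Win(B_i)_A$ satisfy equations~\eqref{eq:dec} and~\eqref{eq:win}. Second, I would show that among all solutions of this polynomial system (together with the stated boundary constraints), the tuple formed by the true probabilities is the \emph{least} one.

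For the first part, the key idea is a \emph{one-portion unfolding} of the Markov chain, using the levelwise decomposition guaranteed by the complete outside property (Theorem~\ref{theo:infinite}): from a vertex $v$ with $\canbis(v)=(B,i)_A$, the first ``macroscopic'' event is entirely captured inside $R^2[A]$. I would partition the paths witnessing $\Dec(B_i,A_j)$ (reaching $\injec_A(j)$ one level down while maintaining $\G(\conf1\setminus\conf2)$) according to how they first leave the ``local'' region of $R^2[A]$. Three disjoint cases arise, exactly matching the directly-computable probabilities $\overleftarrow{p}$, $p(B_i,D_h)_A$, and $\overrightarrow{p}(B_i,C_k)_A$: the path either descends immediately to $\injec_A(j)$ (giving the base term $\overleftarrow{p}(B_i,A_j)$), or first reaches some same-level position $(D,h)_A$ and then descends (giving $p(B_i,D_h)_A\cdot\Dec(D_h,A_j)$ by the Markov property and level-independence), or first climbs to some $(C,k)_B$ at level $n+1$, from which it must descend twice, once via $\Dec(C_k,B_\ell)$ back to level $n$ and then via $\Dec(B_\ell,A_j)$ to level $n-1$. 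Summing over the disjoint first-exit events yields~\eqref{eq:dec}; the derivation of~\eqref{eq:win} is analogous, with $\Win$ accounting for the event of satisfying $\conf1\Until\conf2$ without ever descending below the initial level, and the boundary constraints reflecting the obvious values of $\Win$ and $\Dec$ at vertices that already belong to $\conf2$ or leave $\conf1\cup\conf2$.

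For the second part, I would argue minimality by a standard approximation (or ``finite-horizon truncation'') argument. Define $\Dec^{(m)}$ and $\Win^{(m)}$ as the probabilities of the corresponding events \emph{restricted to paths of length at most $m$} (or restricted to at most $m$ level-changes). These finite-horizon quantities are zero at horizon $0$ and satisfy the recurrences obtained from~\eqref{eq:dec}--\eqref{eq:win} by decrementing the horizon index on the right-hand side, so they coincide with the iterates of the monotone operator $\Phi$ defined by the system started from $\mathbf{0}$. Since all coefficients $\overleftarrow p,\ p,\ \overrightarrow p$ are nonnegative, $\Phi$ is monotone on the complete lattice $[0,1]^d$, and by continuity (Kleene's theorem) its least fixed point is $\sup_m \Phi^m(\mathbf 0)=\lim_m(\Dec^{(m)},\Win^{(m)})$. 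Monotone convergence of the truncated probabilities to the true ones then identifies this least fixed point with $(\Dec,\Win)$, so the genuine probabilities are the least solution, as claimed.

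The main obstacle I anticipate lies in the first part: carefully verifying that the three first-exit cases are genuinely disjoint and exhaustive, and that the ``climb then descend twice'' term decomposes as a clean product $\overrightarrow{p}(B_i,C_k)_A\cdot\Dec(C_k,B_\ell)\cdot\Dec(B_\ell,A_j)$. This requires the strong Markov property together with the level-independence established earlier (the isomorphism of $\M[v]$ and $\M[v']$ whenever $\canbis(v)=\canbis(v')$), and one must check that after climbing to $(C,k)_B$ the path cannot shortcut back below level $n-1$ without passing through an intermediate same-level position $B_\ell$. The bookkeeping of which $\G(\conf1\setminus\conf2)$ constraint is in force on each sub-path, and ensuring the ranges of the summation indices $D_h$, $C_k$, $B_\ell$ are exactly the admissible successor positions, is where the argument is most delicate; the minimality part is comparatively routine once the fixed-point formulation is set up.
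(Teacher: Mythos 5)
Your proposal takes essentially the same route as the paper's proof: the same exhaustive three-case first-exit partition of paths (direct descent via $\overleftarrow{p}(B_i,A_j)$, same-level transfer via $p(B_i,D_h)_A$, climb-then-double-descent via $\overrightarrow{p}(B_i,C_k)_A$) establishes Equations~\eqref{eq:dec} and~\eqref{eq:win}, and minimality is argued, as in the paper, through the monotone continuous operator on $[0,1]^n$ together with length-truncated probabilities converging to the true values. The only (harmless) inaccuracy is your claim that the horizon-$m$ truncations \emph{coincide} with the Kleene iterates $\Phi^m(\mathbf{0})$ --- since the coefficients $p$, $\overleftarrow{p}$, $\overrightarrow{p}$ already carry mass from arbitrarily long paths inside a single portion $R^2[A]$, one only gets a domination inequality, which is exactly what the paper proves (the truncated probabilities are no greater than the least solution, by induction on the truncation length $k$) and which suffices to conclude.
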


\begin{proof}
  The correctness of Equations~\ref{eq:dec} and~\ref{eq:win} is proved
  by partitioning the set of paths issued from vertex $v$ with
  $\canbis(v) = (B,i)_A$.

  Precisely, concerning Equation~\ref{eq:dec}, any path from $v$ with
  $\canbis(v) =(B,i)_A$ to $v' = \injec_A(j)$ 
  (and
  $\level(v') = \level(v) -1$) satisfying $\G \conf1 \setminus \conf2$
  falls in exactly one of the following cases:
\begin{itemize}
\item either it goes directly from $v$ to $v'$ without leaving $v$'s
  level;
\item or it reaches vertex $v''$ with $\canbis(v'') = (D,h)_A$ and
  $\level(v'') = \level(v)$, and then goes from $v''$ to $v'$;
\item or it reaches some vertex $v''$ with $\canbis(v'') = (C,k)_B$ and
  $\level(v'') = \level(v)+1$, and then returns to $v$'s level at
  vertex $v^{(3)}$ with $\canbis(v^{(3)}) = (B,\ell)_A$ and from there
  finally reaches $v'$.
\end{itemize}
This case distinction is illustrated on Figure~\ref{fig:schema} where
plain arrows represent paths in $R^2[A]$ (as presented earlier) and
dotted arrows represent recursive probabilities to decrease level.
%
\begin{figure}[htbp]
\begin{center}
  {\small
    \begin{tikzpicture}[scale=.75, auto=left]
      %
      
      \tikzstyle{every label}=[label distance= 0.5mm];
      \tikzstyle{sommets}=[draw,shape=circle,inner sep=0pt, outer sep = 0pt, fill= black,minimum size= 1mm];
      \tikzstyle{recriture}=[line width = 1pt, arrows = -latex];
      \tikzstyle{arcnt}=[arrows = -latex, shorten >=2pt, shorten <=2pt,color=orange];
      \tikzstyle{direct}=[arrows = -latex, shorten >=2pt, shorten <=2pt];
      \tikzstyle{indirect}=[arrows = -latex, dashed, shorten >=2pt, shorten <=2pt];
      \tikzstyle{surligne}=[line width = 3pt, color=blue!10];
      
      \draw 
      (0,2) 
      +(0,1.3) node [arcnt](laba) {$A$}
      +(0,-.3) node (aj) {$A_j$}
      +(-.7,1) rectangle +(.7,-1.5)
      ++(5,2.5) 
      +(0,1.3) node [arcnt](labb) {$B$}
      +(0,0) node (bi) {$B_i$}
      +(0,-2) node[fill=blue!10] {$B_\ell$}
      node (bl) {$B_\ell$}
      +(-.7,1) rectangle +(.7,-3)
      ++(0,-5) 
      +(0,1.3) node [arcnt](labd) {$D$}
      +(0,-.3) node[fill=green!20]  {$D_h$}
       node (dh) {$D_h$}
      +(-.7,1) rectangle +(.7,-1.5)
      ++(5,5) 
      +(0,1.3) node [arcnt](labc) {$C$}
      +(0,-.3) node[fill=blue!10] {$C_k$}
      node (ck) {$C_k$}
      +(-.7,1) rectangle +(.7,-1.5)
      ;

      \draw [surligne]
      (bi) -- (ck) -- (bl) -- (aj);
      \draw [surligne,color=green!20]
      (bi) edge [bend left] (dh) (dh) -- (aj);

      \draw [direct] 
      (bi) edge node [pos=.3, auto=right] {$\overleftarrow{p}(B_i,A_j)$} (aj)
      (bi) edge node [pos=.3, auto=left] {$\overrightarrow{p}(B_i,C_k)$} (ck)
      (bi) edge [bend left] node [pos=.5, auto=left] {$p(B_i,D_h)$} (dh)
      ;
      \draw [indirect] 
      (ck) edge node [pos=.7, auto=left] {$\Dec(C_k,B_\ell)$} (bl)
      (dh) edge node [pos=.4, auto=left] {$\Dec(D_h,A_j)$} (aj)
      (bl) edge node [pos=.7, auto=left] {$\Dec(B_\ell,A_j)$} (aj)
      ;
      \draw 
      (-1,-.8)  node [label=30:$H_A$]{}-- ++(0,5.1) -- ++(7,3) -- ++(0,-10) -- (-1,-.8);
      \draw 
      (4,1.2) -- ++(0,4.6) -- ++(7,1.5) -- ++(0,-7.4) node [label=120:$H_B$]{} -- (4,1.2);
    \end{tikzpicture}
  }
  \caption{Illustration of Equation~\eqref{eq:dec} for $\Dec(B_i,A_j)$.}
\label{fig:schema}
\end{center}
\end{figure}
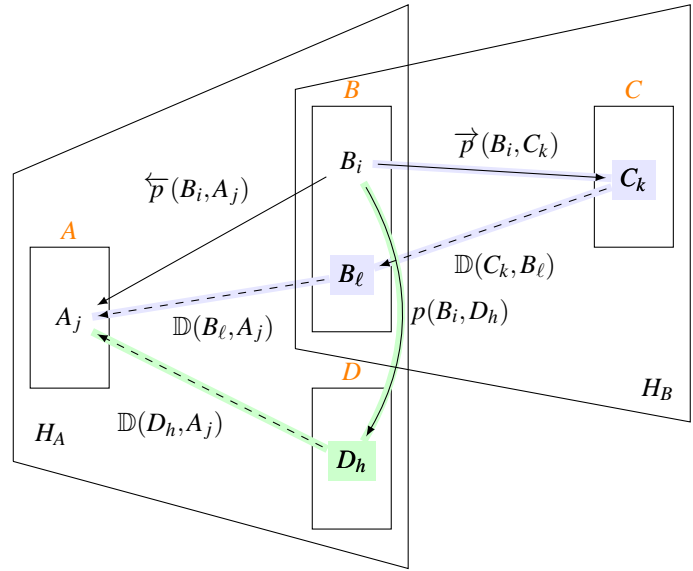

For Equation~\ref{eq:win}, the reasoning is similar. Any path issued
from $v$ satisfying $\conf1 \Until \conf2$ without visiting vertices
of level smaller than $\level(v)$:
\begin{itemize}
\item either satisfies $\conf1 \Until \conf2$ without visiting any
  other non-terminals (and hence at $v$'s level)
\item or reaches a vertex $v'$ with $\canbis(v') = (D,h)_A$ and
  $\level(v') = \level(v)$ and from then on satisfies $\conf1 \Until
  \conf2$ without decreasing level
\item or goes to vertex $v'$ with $\canbis(v') = (C,k)_B$ and
  $\level(v') = \level(v) +1$, and from there either satisfies $\conf1
  \Until \conf2$ without going back to verticesat $v$'s level, or
  reaches some $v''$ with $\canbis(v'') = (B,\ell)_A$ and $\level(v'') =
  \level(v)$ and from $v''$ satisfy $\conf1 \Until \conf2$ without
  decreasing level.
\end{itemize}
These partitions of the set of paths issued from vertex $v$ with
$\canbis(v) = (B,i)_A$ justify Equations~\ref{eq:dec} and~\ref{eq:win}.

The system of equations defines an operator $\mathcal{F} : [0,1]^n
\rightarrow [0,1]^n$ where $n$ is the number of variables appearing in
the system. The valuation $\mathcal{F}(\nu)$ of the variables is
obtained by evaluating each equation the right-hand side where each
variable is substituted with its value in $\nu$. This operator is
monotonic and continuous, and hence admits a unique least fixed-point,
which is eventually reached by iterating $\mathcal{F}$ on the
null-valuation which assigns $0$ to all variables. Note that the
convergence towards the least fixed-point might require infinitely
many iterations.

To prove that the $\Dec(B_i,A_j)$'s and $\Win(B_i)$'s form the {\em
  least} solution of the system, we consider the probabilities
approximated by truncating the paths at length $k$. Precisely, let
$\Dec(B_i,A_j)^k$ be the probability-mass of $\Dec(B_i,A_j)$
restricted to paths of length at most $k$, ; similarly let
$\Win(B_i)_A^k$ be the probability-mass of paths of length at most $k$
in $\Win(B_i)_A$. As $k$ tends to infinity, those probabilities tend
to $\Dec(B_i,A_j)$ and $\Win(B_i)_A$, respectively. It is thus
sufficient to prove that, for any $k \in \mathbb{N}$,
$\Dec(B_i,A_j)^k$ and $\Win(B_j)_A$ are no greater than the least
solution of the system. This is easily done by induction on $k$.
\end{proof}

Recall that our goal is to compute $\prob(v_0 \models \conf1 \Until
\conf2)$. This probability can be expressed using the
$\Dec(B_i,A_j)$'s and $\Win(B_i)_A$'s:
\begin{equation}
\label{eq:proba}
\prob(v_0 \models \conf1 \Until \conf2) = p(v_0)_\axio\ + 
\sum_{A_i \in \Succ(\axio)} \overrightarrow{p}(v_0,A_i)_\axio\
\Win(A_i)_\axio,
\end{equation}
where
\begin{itemize}
\item $p(v_0)_\axio$ is the probability in $H_\axio$ from $v_0$ to
  fulfill $\conf1 \Until \conf2$ without visiting any vertex $v'$ with
  $\canbis(v') = (A,i)_\axio$ for some $A \in Succ(\axio)$;
\item $\overrightarrow{p}(v_0,A_i)_\axio$ is the probability in $H_Z$
  from $v_0$ to $v'$ with $\canbis(v') = (A,i)_\axio$ while satisfying
  $\G (\conf1 \setminus \conf2)$ and without visiting any vertex $v''$
  such that $\canbis(v'')= (B,j)_\axio$ (for some $B \in
  \Succ(\axio)$) in between.
\end{itemize}

\begin{example}
  We illustrate the computation of $\prob(v_0 \models \conf1 \Until
  \conf2)$ on our running example. 
  Since $\canbis(v_0) = (A,1)_\axio$ and $v_0 \notin \conf2$, $p(v_0)_\axio
  =0$ and $\overrightarrow{p}(v_0,A_1) =1$. From
  Equation~\ref{eq:proba} we deduce $\prob(v_0 \models \conf1 \Until
  \conf2) = \Win(A_1)_\axio$. Let us detail some steps of the
  computation.
  \begin{align*}
 \Win(A_1)_\axio &= a \Win(A_2)_\axio + a
  \bigl(\Win(A_1)_A + \Dec(A_1,A_1) \Win(A_1)_\axio + \Dec(A_1,A_2)
  \Win(A_2)_\axio \bigr) \\
  &= a \Win(A_1)_A + a \Dec(A_1,A_1) \Win(A_1)_Z,
 \end{align*}
 since $\Win(A_2)_Z =0$. The probability $\Win(A_2)_A$ is easily
 computed: $\Win(A_2)_A =a$. Then $\Dec(A_1,A_1)$ is the least
 solution of a quadratic equation:
\[
a \Dec(A_1,A_1)^2 - \Dec(A_1,A_1) + a d = 0.
\]
Letting that $a=\frac{1}{2}$ and $d= \frac{1}{4}$, we get
$\Dec(A_1,A_1) = 1 - \frac{\sqrt 3}{2}$. Finally
\begin{align*}
\Win(A_1)_\axio &= \frac{a \Win(A_1)_A}{1 - a \Dec(A_1,A_1)}
= \frac{a^3}{(1-a -a \Dec(A_1,A_1))(1- a \Dec(A_1,A_1))}
\Win(A_1)_\axio= \frac{2}{3} (2\sqrt{3} -3) \approx 0.31.
\end{align*}
\end{example}

Note that the exact computation of the solutions of the system may not
always be performed. Indeed, in general, the equations are polynomials
(of arbitrary degree) in the variables. However, similarly as in
\cite{EKM06}, approximate values for the solutions can be
computed.

\begin{theorem}\label{theo:approx}
  Let $\calP = (N,T,R,Z,\mu)$ be a \phrgg, and $v_0$ a vertex in
  $H_\axio$. For $\rho\in \ratio\cap [0,1]$ and $\sim \in \ens{\pptit,
    <, \pgd, >}$, it is decidable whether $\prob(v_0 \models \conf1
  \Until \conf2) \sim \rho$. Moreover, given $0<\lambda<1$, one can
  compute $\rho_1, \rho_2 \in \ratio$ such that $\rho_1\leq\prob(v_0
  \models \conf1 \Until \conf2)\leq\rho_2$, and $\rho_2-\rho_1\pptit
  \lambda$.
\end{theorem}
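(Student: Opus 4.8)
The plan is to reduce the quantitative question $\prob(v_0 \models \conf1 \Until \conf2) \sim \rho$ to a decision problem in the first-order theory of the reals, exploiting the fact established in Theorem~\ref{th:least-sol} that the target probability is expressible via the $\Dec(B_i,A_j)$'s and $\Win(B_i)_A$'s, which form the \emph{least} solution of an explicit finite system of polynomial equations with rational coefficients (the $p$, $\overleftarrow{p}$, $\overrightarrow{p}$ values being computable exactly in $R^2[A]$, hence rational). The key observation is that although this least solution may be irrational and may require infinitely many fixed-point iterations to reach, it is nonetheless \emph{definable} in the structure $(\mathbb{R}, +, \cdot, \leq, 0, 1)$: it is the least tuple in $[0,1]^n$ satisfying the polynomial system. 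By Tarski's theorem, the first-order theory of real-closed fields is decidable, so any sentence built from these polynomials is effectively decidable.

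First I would write $\Phi(\overline{x})$ for the quantifier-free formula asserting that the tuple $\overline{x} = (\ldots, \Dec(B_i,A_j), \ldots, \Win(B_i)_A, \ldots)$ satisfies Equations~\eqref{eq:dec} and~\eqref{eq:win} together with the boundary constraints listed in Theorem~\ref{th:least-sol} and the bounds $0 \leq x_\ell \leq 1$. I would then express ``$\overline{x}$ is the least solution'' by the first-order sentence
\[
\Phi(\overline{x}) \wedge \forall \overline{y}\,\bigl(\Phi(\overline{y}) \imp \textstyle\bigwedge_\ell x_\ell \leq y_\ell\bigr).
\]
Since the operator $\calF$ is monotonic and continuous with a unique least fixed-point (as argued in the proof of Theorem~\ref{th:least-sol}), this formula has a unique satisfying tuple, namely our probabilities. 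Substituting into Equation~\eqref{eq:proba} gives a term for $\prob(v_0 \models \conf1 \Until \conf2)$ in the same variables, and the decision question becomes the truth of the sentence
\[
\exists \overline{x}\,\Bigl(\Phi(\overline{x}) \wedge \forall \overline{y}(\Phi(\overline{y}) \imp \textstyle\bigwedge_\ell x_\ell \leq y_\ell) \wedge \bigl(p(v_0)_\axio + \textstyle\sum_i \overrightarrow{p}(v_0,A_i)_\axio\, \Win(A_i)_\axio\bigr) \sim \rho\Bigr),
\]
which Tarski's procedure decides. For the approximation claim, I would apply this decidability to the rational thresholds of a dyadic (or otherwise refining) search: binary search on the value of the probability, at each stage testing $\prob \leq \rho$ and $\prob \geq \rho$ for the current midpoint $\rho$, halving the candidate interval $[\rho_1,\rho_2]$ until its width drops below $\lambda$. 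This terminates after $\lceil \log_2(1/\lambda)\rceil$ calls to the decision oracle and returns rationals $\rho_1 \leq \prob(v_0 \models \conf1 \Until \conf2) \leq \rho_2$ with $\rho_2 - \rho_1 \leq \lambda$.

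The main obstacle I anticipate is purely in the \emph{definability of the least solution} rather than in the final quantifier elimination: one must be confident that the intended probabilities coincide with the unique least fixed-point picked out by the $\forall \overline{y}$ clause, and not merely with \emph{some} solution of the polynomial system, which may well have several fixed-points in $[0,1]^n$. This is exactly what the last paragraph of the proof of Theorem~\ref{th:least-sol} secures, by showing each truncated approximation $\Dec(B_i,A_j)^k$, $\Win(B_i)_A^k$ stays below any solution and converges to the true value; I would invoke that result directly. A minor secondary point is to ensure the coefficients $p(B_i)_A$, $p(B_i,D_h)_A$, $\overleftarrow{p}(B_i,A_j)$, $\overrightarrow{p}(B_i,C_k)_A$ are genuinely rational and effectively computable: this follows because each is a reachability probability inside the \emph{finite} Markov chain $R^2[A]$, computable by standard linear-algebra over~$\ratio$, so all coefficients entering the sentence are rational and the sentence is effectively constructible.
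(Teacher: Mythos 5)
Your proposal is correct and follows essentially the same route as the paper's own (very terse) proof: express $\prob(v_0 \models \conf1 \Until \conf2)$ via Equation~\eqref{eq:proba}, encode the system of Equations~\eqref{eq:dec} and~\eqref{eq:win} in the first-order theory of the reals, decide by Tarski's procedure, and obtain the approximation by dichotomic search. You merely make explicit the details the paper leaves implicit --- in particular the $\forall \overline{y}$ clause pinning down the \emph{least} solution (needed since the polynomial system may have several solutions in $[0,1]^n$, and justified by Theorem~\ref{th:least-sol}) and the rationality and effective computability of the coefficients from the finite portions $R^2[A]$.
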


\begin{proof} 
  Deciding $\prob(v_0 \models \conf1 \Until \conf2)\sim \rho$ is
  equivalent to deciding $p(v_0)_\axio\ + \sum_{A_i \in \Succ(\axio)}
  \overrightarrow{p}(v_0,A_i)_\axio \Win(Ai)_\axio\sim \rho$. Using
  Equations~\ref{eq:dec} and~\ref{eq:win}, the decidability of the
  first order arithmetics of reals~\cite{tarski51} yields the
  decidability of our problem.  An iterative application of the
  decision algorithm allows to compute in a dichotomic way the desired
  approximations $\rho_1$ and $\rho_2$.
\end{proof}

\subsection{Undecidability of quantitative model checking}

In this subsection, we give a proof of the undecidability of the exact
quantitative PCTL model-checking problem for {\phrgg}s. Since
{\phrgg}s generalise probabilistic pushdown automata, this result is a
consequence of the undecidability of quantitative PCTL model-checking
for probabilistic pushdown automata~\cite{BKS-stacs05}. We however
adapt the proof presented in \cite{BKS-stacs05} to graph grammars,
which, in our opinion, enable a simpler exposition.

The undecidability is proved by a reduction of Post Correspondance
Problem (PCP).  Recall that an instance of the PCP is a sequence of
pairs of words $((u_i,v_i))_{i\leq n}$ over a fixed alphabet
$\alphab$, and the problem is to determine whether there is an integer
$k$, and a sequence $(i_\ell) _{\ell\leq k}$ such that
$u_{i_1}u_{i_2}\ldots u_{i_{k}}=v_{i_1}v_{i_2}\ldots v_{i_k}$.

The quantitative model-checking problem of PCTL for {\phrgg}s is the
following:
\begin{center}
\begin{minipage}{0.8\textwidth}
  \noindent {Instance:} A \phrgg $\calP$, 
  and a PCTL formula $\varphi$.

\noindent {Question:} Is $\varphi$ valid on $\M_\calP$?
\end{minipage}
\end{center}

\begin{theorem}[\cite{BKS-stacs05}]
\label{th:undec}
The quantitative model-checking problem of PCTL for {\phrgg}s is
undecidable.
\end{theorem}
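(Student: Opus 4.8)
The plan is to reduce PCP to the quantitative PCTL model-checking problem for \phrggs. Given an instance $((u_i,v_i))_{i\leq n}$ of PCP over $\alphab$, I would construct a \phrgg $\calP$ together with a PCTL formula $\fifi$ so that $\fifi$ holds on $\M_\calP$ if and only if the PCP instance has no solution. The essential idea, following \cite{BKS-stacs05}, is to encode candidate index sequences $(i_\ell)_{\ell\leq k}$ as paths in the Markov chain, and to arrange the probabilities so that the \emph{exact} probability of two symmetric events (one building the $u$-word, the other the $v$-word) coincide precisely when no index sequence yields a matching pair. Since the comparison of a probability to a rational threshold $\rho$ is the quantitative feature of PCTL, the undecidability of PCP transfers to this model-checking problem.

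Concretely, first I would build a \hrgg whose generated graph has a stack-like structure (using the pushdown-to-grammar correspondence from Subsection~\ref{subsec:link}) capable of guessing an index sequence and pushing the corresponding letters of $u_{i_1}\cdots u_{i_k}$, respectively $v_{i_1}\cdots v_{i_k}$, onto a stack. A branching gadget assigns, at each guessing step, a probability depending on the chosen index and the emitted letters, chosen so that the probability of generating a particular word-and-stack configuration factors as a product over the letters. The key trick is to encode each letter $a\in\alphab$ as a digit in a base-$(\abs{\alphab}+1)$ expansion, so that the probability of reaching a fixed terminal state along the $u$-branch equals a number whose base-$(\abs{\alphab}+1)$ representation \emph{is} the word $u_{i_1}\cdots u_{i_k}$ (read as a numeral), and similarly for the $v$-branch. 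Then the two branches reach the target state with equal probability exactly when the two numerals agree, \ie when $u_{i_1}\cdots u_{i_k}=v_{i_1}\cdots v_{i_k}$.

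Next I would set up colours $\conf1,\conf2$ in the grammar marking the relevant target vertices, and express the desired equality (or strict inequality) of the two reachability probabilities as a single PCTL formula of the form $\F^{\sim\rho}$ applied to an appropriate atom, or as a boolean combination of such formulae comparing the $u$-probability and the $v$-probability to the same threshold. Because \phrggs generate exactly the same Markov chains as probabilistic pushdown automata, and the probabilities $\prob(v_0\models\conf1\Until\conf2)$ are genuine quantitative values, a formula comparing probability to a rational $\rho$ distinguishes the solvable case from the unsolvable one. The reduction is effective: the grammar and formula are computable from the PCP instance.

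The main obstacle will be the arithmetic encoding: making the probabilities multiply cleanly so that the probability of a branch equals the base-$(\abs{\alphab}+1)$ numeral of the generated word, while simultaneously keeping the grammar \phrgg-valid (every vertex must have out-probabilities summing to $1$, which forces careful design of the branching and ``padding'' transitions so that the total mass is conserved and no vertex has infinite out-degree). A secondary subtlety is handling index sequences of differing length and ensuring that the comparison isolates exactly the matching condition rather than spurious coincidences of numerals; this is where adapting the \cite{BKS-stacs05} construction to the grammar setting, as the authors promise, yields a cleaner presentation, since the structural invariance of grammars lets one reason about the stack encoding directly at the level of the rewriting rules rather than through pushdown transitions.
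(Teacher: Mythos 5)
Your high-level plan coincides with the paper's: reduce PCP, use the grammar to generate a vertex for every candidate index sequence, and turn the letters of the two concatenated words into digit expansions realised as reachability probabilities. However, there is a genuine gap at the decisive step of your reduction. PCTL, as defined in this paper, only compares the probability of a path event at a \emph{single} state against a \emph{fixed} rational threshold $\rho$; it has no mechanism for comparing two probabilities with each other. In your encoding, the $u$-branch probability is the base-$(\abs{\alphab}+1)$ numeral of $u_{i_1}\cdots u_{i_k}$ and the $v$-branch probability is the numeral of $v_{i_1}\cdots v_{i_k}$, so the common value witnessing a match \emph{depends on the candidate solution} and is not known in advance. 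Consequently no fixed $\rho$, and no boolean combination of formulae of the form $\F^{\sim \rho}\psi$ with constant thresholds, can express ``the $u$-probability equals the $v$-probability'': that would require quantifying over the unknown common value, which PCTL cannot do. This is precisely the step you leave to ``a single PCTL formula \ldots or a boolean combination of such formulae,'' and as stated it fails.

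The paper closes exactly this hole with a complementation trick that your proposal is missing: the letters of the $u_i$'s and of the $v_i$'s are coloured \emph{oppositely} ($C_{u_i}(k)$ is $red$ iff $u_i(k)=0$, while $C_{v_i}(k')$ is $red$ iff $v_i(k')=1$), the chain splits with probability $\frac{1}{2}$ onto the two branches, and the mass halves at each letter position. From a candidate vertex $v_I$ the probability of reaching $red$ is then $\frac{1}{2}(x+y)$, where $x$ collects the $0$-positions of the $u$-word and $y$ the $1$-positions of the $v$-word; by uniqueness of binary expansions this hits the \emph{fixed} constant $\frac{1}{2}$ exactly when the two words agree letter by letter. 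The single formula $S \wedge (\texttt{tt}\Until^{=\frac{1}{2}} Green)$, with its constant threshold, therefore detects a PCP solution, and the existential quantification over candidates is absorbed into asking whether the formula is valid on $\M_\calP$ (your inverted convention, formula true iff no solution, is immaterial). Folding the two branches into one event with a candidate-independent target probability is also what resolves the subtlety you flag about spurious numeral coincidences and differing lengths: complementary digits at matching positions sum to the constant only under exact positionwise agreement, so no separate argument is needed. Without this (or an equivalent) folding device, your reduction does not go through.
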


\begin{proof}
  This result is a consequence of~\cite{BKS-stacs05} but we give here
  a direct proof. Let $((u_i,v_i))_{i\leq n}$ be a sequence of pairs
  of words on $\alphab=\ens{0,1}$.  From this instance of PCP, we
  define the following \phrgg: $\mathcal{P}=(N,T,R,Z, \mu)$, where:
\begin{itemize}
\item $N=\ens{Z}_0\cup \ens{\new_i\ |\ i\leq n}_2$;
\item $T = \ens{s, green, red}_1\cup\ens{a,b}_2$;
\item $ \mu(a)=0.5, \mu(b)=1$;
\end{itemize}
and the set $R = (H_B,\injec_B)_{B\in N}$ of rewriting rules is
depicted below:
\begin{center}
  \begin{minipage}{0.35\linewidth}
    {\scriptsize
      \begin{tikzpicture}[scale=.7]
        \tikzstyle{every label}=[label distance= 0.5mm];
        
        \tikzstyle{sommets}=[draw,shape=circle,inner sep=0pt, outer
                        sep = 0pt, fill= black, draw=black,minimum size= 1mm];

        \tikzstyle{arct}=[arrows = -latex, shorten >=2pt, shorten <=2pt,color=blue, bend angle=20];
        \tikzstyle{arcnt}=[arrows = -latex, shorten >=2pt, shorten <=2pt,color=orange, bend angle=20];
        \tikzstyle{recriture}=[line width = 1pt, arrows = -latex,auto=left];
        
        \node at (-1,2) {$H_Z$:};

        \begin{scope}[every node/.style =  {sommets}, every label/.style ={draw=white, fill=white}]
          \draw (0,1) node (orig)[label=left:$green$]{}
          +(2,-1) node (bas){}
          +(2,1) node (haut){}
          ;
        \end{scope}

        \draw[arct]
        (bas) edge node [midway, auto=left] {$1$} (orig)
        (haut) edge node [midway, auto=right] {$1$} (orig)
        (haut) edge [arcnt] node [midway, auto=left] {$(\new_i)_{i\in [n]}$} (bas)
        ;
      \end{tikzpicture}
    }
  \end{minipage}
  \hfill
  \begin{minipage}{0.6\linewidth}
    {\scriptsize
      \begin{tikzpicture}[scale=.7, auto=left]
        \tikzstyle{every label}=[label distance= 0.5mm];
        
        \tikzstyle{sommets}=[draw,shape=circle,inner sep=0pt, outer
                sep = 0pt, fill= black, draw=black,minimum size= 1mm];
        
        \tikzstyle{recriture}=[line width = 1pt, arrows = -latex];

        \tikzstyle{arct}=[arrows = -latex, shorten >=2pt, shorten <=2pt,color=blue, bend angle=20];
        \tikzstyle{arcnt}=[arrows = -latex, shorten >=2pt, shorten <=2pt,color=orange, bend angle=20];
        
        \node at (-1.5,3) {$H_{\new_i}$:};
        
        \begin{scope}[every node/.style =  {sommets}, every label/.style ={draw=white, fill=white}]
          \draw (0,0) node (but01)[label=above left:$\injec(2)$]{}
          +(0,2) node (but21)[label=above left:$\injec(1)$]{}
          \foreach \niveau in {0,2}{
            \foreach \posit/\nomit in {1.5/2, 4/3, 5.5/4, 7/5}{
              +(\posit, \niveau) node (but\niveau\nomit){}
            }
          }
          +(7.5,1) node (debut)[label=right:$s$]{} 
          ;
        \end{scope}
        \foreach \niveau/\uivi/\posbel/\posit in {0/u_i/below/below left of,2/v_i/above/above left of}{
          \foreach \nomit/\indice in {2/\abs{\uivi},3/2,4/1,5/0}{
            \node[sommets,node distance=1cm] (target\niveau\nomit)
            [\posit=but\niveau\nomit,label=\posbel:$C_{\uivi}(\indice)$]  {};
          }
        }
        \draw[arcnt] 
        (but25) edge [bend angle=90, looseness=1.9,bend left] node [midway] {$(\new_i)_{i\in[n]}$} (but05)
        ;
        \draw[arct]
        \foreach \niveau/\cote/\acote in {0/right/left, 2/left/right}{
          (debut) edge node [midway, auto=\cote] {$0.5$} (but\niveau5)
          (but\niveau2) edge node [midway, auto=\cote] {$0.5$} (but\niveau1)
          (but\niveau3) edge [dashed,auto=\cote] node [midway] {} (but\niveau2)
          (but\niveau2) edge node [near start, auto=\acote] {$0.5$} (target\niveau2)
          \foreach \nomit/\nomplus in {3/4,4/5}{%
            (but\niveau\nomplus) edge node [midway, auto=\cote] {$0.5$} (but\niveau\nomit)
            (but\niveau\nomit) edge node [near start, auto=\acote] {$0.5$} (target\niveau\nomit)
          }
          (but\niveau5) edge node [near start, auto=\acote] {$0.5$} (target\niveau5)
        }
        ;
      \end{tikzpicture}
    }
  \end{minipage}
\end{center}
Colours $green$, and $red$ label vertices as follows. For each $i\leq
n$, $k\leq \abs{u_i}$, and $k'\leq \abs{v_i}$,
 \[
   C_{u_i}(k)=\left\{
     \begin{array}{cc}
       green & \mbox{ if } u_i(k) = 1 \\
       red & \mbox{ if } u_i(k) = 0
     \end{array}
   \right. , 
  \quad C_{v_i}(k')=\left\{
     \begin{array}{cc}
       green & \mbox{ if } v_i(k')=0 \\
       red & \mbox{ if } v_i(k')=1
     \end{array}
   \right. 
 \]
     
Consider the following PCTL formula:
  \[
   \laformule= S \wedge ( \texttt{tt}\Until^{=\frac 1 2} Green)
  \]
  where $Green$ and $S$ are atomic propositions corresponding to
  vertices labelled respectively by $green$ and $s$, terminals of arity
  $1$. We claim that $\laformule$ is valid on
  $\M_{\mathcal{P}}$ if and only if there is a solution to
  the Post instance $((u_i,v_i))_{i\leq n}$.

  In the infinite graph generated by $\calP$,
  each vertex labelled $s$ is connected to the origin (labelled $green$
  in $H_Z$) via a sequence of $u_i$'s on the lower branch, and of
  $v_i$'s on the upper branch (with the same indices). 
  \newcommand{\upath}{u_{\mathsf{path}}} 
  \newcommand{\vpath}{v_{\mathsf{path}}}
  Let $I=(I_0, I_1, \cdots, I_m)$ be a sequence of indices in
  $\{1,\cdots,n\}$, and consider $v_I$ the $s$-vertex corresponding to
  this sequence.  The probability to reach $red$ from $v_I$ is the
  following: $ \prob(v_I\models \texttt{tt}\Until
  Red)=\frac 12 (\upath + \vpath) $ with
\begin{equation*}
  \upath = \sum_{j\leq m}\ \sum_{k\leq \abs{u_j}}\frac 1{2^{(\sum_{\ell<j}\abs{u_{I_\ell}})+k}}(u_j(k) = 0) \quad \textrm{and}\quad
  \vpath = \sum_{j\leq m}\ \sum_{k'\leq \abs{v_j}}\frac 1{2^{(\sum_{\ell<j}\abs{v_{I_\ell}})+k}}(v_j(k')=1).
\end{equation*}
The only situation where $\prob(v_I\models \texttt{tt} \Until Red)=
\frac{1}{2}$ (and hence $\prob(v_I\models \texttt{tt} \Until Green) =
\frac{1}{2}$) occurs when the same sequence of letters appear in
$\upath$ and $\vpath$ (from the unicity of the binary expansion).
\end{proof}

\section {Conclusion}

In this paper we introduced \prgs, as graphs generated by graph
grammars where terminal arcs are labelled with probabilities. Results
concerning the model-checking of probabilistic pushdown automata
extend to this context. Precisely, both the approximate PCTL and
qualitative PCTL model checking problems are decidable, whereas the
exact quantitative model-checking problem is undecidable. 

We believe that our model of \phrggs offers a major benefit compared
to pushdown systems: it focuses on structural aspects whereas
configurations graphs of pushdown automata emphasise combinatorial
aspects. Furthermore in order to identify classes of infinite state
systems with a decidable quantitative PCTL model checking we believe
that {\em structural} restrictions on the grammar might prove worth
studying. A natural extension of our work is to extend the positive
results to graphs where infinite in-degree in allowed. Another
research direction is to try to climb up the Caucal hierarchy,
like~\cite{cara_worh
}, and pursue our work on higher-order pushdown systems.

\bibliographystyle{eptcs}

\end{document}